\newcommand{\nashhighlight}{\colorbox{lime}}
\newtheorem{theorem}{Theorem}
\newtheorem{lemma}[theorem]{Lemma}
\newtheorem{corollary}[theorem]{Corollary}
\theoremstyle{definition}%
\newtheorem{example}{Example}%
\theoremstyle{definition}%
\newtheorem{definition}{Definition}%
\title{On Nash Equilibria in Normal-Form Games With Vectorial Payoffs\thanks{Part of this work was carried out by the first author for his thesis \cite{ropke2021thesis} under the supervision of the other authors. Some preliminary results in this article were presented in the Multi-Objective Decision Making Workshop 2021 \cite{ropke2021nash}}}
\author{ \href{https://orcid.org/0000-0001-5045-6127}{\includegraphics[scale=0.06]{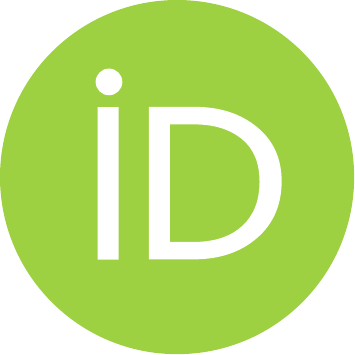}\hspace{1mm}Willem R\"{o}pke \Envelope}\\
	Artificial Intelligence Lab \\
	Vrije Universiteit Brussel, Belgium \\
	\texttt{willem.ropke@vub.be} \\
	\And
	Diederik M.\ Roijers\\
	Artificial Intelligence Lab \\
	Vrije Universiteit Brussel, Belgium \&\\
	Microsystems Technology \\
	HU~University~of~Applied~Sciences~Utrecht The~Netherlands\\
	\texttt{diederik.yamamoto-roijers@hu.nl} \\
	\And
	Ann Now\'{e} \\
	Artificial Intelligence Lab \\
	Vrije Universiteit Brussel, Belgium \\
	\texttt{ann.nowe@vub.be} \\
	\And
	Roxana R\u{a}dulescu \\
	Artificial Intelligence Lab \\
	Vrije Universiteit Brussel, Belgium \\
	\texttt{roxana.radulescu@vub.be} \\
}
\begin{document}
\maketitle

\begin{abstract}
We provide an in-depth study of Nash equilibria in multi-objective normal form games (MONFGs), i.e., normal form games with vectorial payoffs. Taking a utility-based approach, we assume that each player's utility can be modelled with a utility function that maps a vector to a scalar utility. In the case of a mixed strategy, it is meaningful to apply such a scalarisation both before calculating the expectation of the payoff vector as well as after. This distinction leads to two optimisation criteria. With the first criterion, players aim to optimise the expected value of their utility function applied to the payoff vectors obtained in the game. With the second criterion, players aim to optimise the utility of expected payoff vectors given a joint strategy. Under this latter criterion, it was shown that Nash equilibria need not exist. Our first contribution is to provide a sufficient condition under which Nash equilibria are guaranteed to exist. Secondly, we show that when Nash equilibria do exist under both criteria, no equilibrium needs to be shared between the two criteria, and even the number of equilibria can differ. Thirdly, we contribute a study of pure strategy Nash equilibria under both criteria. We show that when assuming quasiconvex utility functions for players, the sets of pure strategy Nash equilibria under both optimisation criteria are equivalent. This result is further extended to games in which players adhere to different optimisation criteria. Finally, given these theoretical results, we construct an algorithm to compute all pure strategy Nash equilibria in MONFGs where players have a quasiconvex utility function.
\end{abstract}

\keywords{Game theory, Nash equilibria, Multi-objective}

\section{Introduction}
\label{sec:introduction}
Game theory is the study of interaction between rational agents \cite{leytonbrown2008essentials}. The field has influenced entire areas of research, with a notable example being economics \cite{samuelson2016game}, and shaped applications such as complex energy systems \cite{he2020application}. One of the most studied settings in game theory is the (single-objective) Normal-Form Game (NFG). In this context, Nash \cite{nash1951non} showed already in 1951 that stable outcomes, in the sense of Nash equilibria, must exist for agents in these settings. An important limitation of the traditional formulation of such games is their restriction to scalar payoffs. Throughout the years, many authors have stressed the need to extend this theory to games with vectorial payoffs, rather than only considering the scalar case \cite{shapley1959equilibrium,wierzbicki1995multiple,zapata2019maxmin}. The motivation for this comes from the fact that many real-world decision making scenarios inherently contain multiple (conflicting) criteria. As an illustration, consider the case of a commuter deciding between taking their car or bike to work with the objectives of maximising speed while also minimising fuel consumption. Such scenarios subsequently return multiple payoffs. In multi-objective games, these payoffs can be modeled as a vector where each entry in the vector corresponds to the payoff for a specific objective.

We study Nash equilibria in normal-form games with vectorial payoffs, i.e., Multi-Objective Normal-Form Games (MONFGs) \cite{blackwell1954analog}. To practically deal with vectorial rewards, we take a utility-based approach assuming that each agent has a utility function that can be used to scalarise payoff vectors \cite{hayes2022practical,roijers2017multi}. One problem that arises with this approach is that it is not immediately clear at which state of the process to apply the utility function. In NFGs we deal with payoffs of mixed strategies by calculating the expected payoff of such a strategy. In the case of vectorial payoffs, there are two different approaches, referred to as optimisation criteria \cite{roijers2013survey}, and the choice for a particular criterion depends on the preference of the player. On the one hand, we could scalarise the different possible outcomes and calculate the expectation of these utilities. This approach is also called the the Expected Scalarised Returns (ESR) criterion and is equivalent to scalarising the multi-objective game with the provided utility functions, turning it into a single-objective normal-form game. We will sometimes refer to this resulting single-objective game as the trade-off game, consistent with terminology used in other works \cite{radulescu2020utility}. Conversely, we might first take the expectation of the mixed strategy and subsequently scalarise this expected payoff vector, also referred to as the Scalarised Expected Returns (SER) criterion. Recent work showed that when players have non-linear utility functions, these two criteria are not equivalent and that Nash equilibria need not exist under SER \cite{radulescu2020utility}. Because of the discrepancy between the two optimisation criteria and the fact that existence of Nash equilibria is not guaranteed for the latter, it is in general not appropriate to simply scalarise such vectorial payoffs a priori to single objective NFGs which would allow to apply traditional game theoretical concepts and techniques. We note that in principle, we could also employ a utility function agnostic approach where no knowledge of the utility function is assumed. However, this implies that we consider expected payoff vectors in the case of mixed strategies, leading us to the SER criterion where the final scalarisation is simply unknown.

We focus on five distinct aspects of MONFGs. First, we aim to provide guarantees for the existence of a Nash equilibrium under SER by imposing sufficient restrictions on the type of utility functions that can be used. Our second goal is to study the relationship between the two optimisation criteria, SER and ESR, when both have Nash equilibria. This culminates in the conclusion that under non-linear utility functions, the number of Nash equilibria under both criteria need not be equal and no equilibria must be shared. Our third contribution restrains the Nash equilibria in question to pure strategy Nash equilibria, to find equivalences between an MONFG with known utility functions and its trade-off game. Next, we subsequently extend these results to games where some agents are optimising for SER and others for ESR. We refer to this as a blended setting and formally define Nash equilibria in such games. Lastly, we utilise the theoretical results from our study in an algorithm that is able to calculate the pure strategy Nash equilibria in a given MONFG with quasiconvex utility functions. Concretely, we contribute the following:

 \begin{enumerate}
    \item We prove the existence of a Nash equilibrium in MONFGs under the SER criterion when all agents have continuous quasiconcave utility functions.
    \item We show that assuming only strictly convex utility functions is not a sufficient guarantee for Nash equilibria to exist under SER.
    \item We show that even when Nash equilibria exist under both criteria, i.e. ESR and SER, the number of Nash equilibria need not be equal and that no equilibria must be shared.
    \item We prove that pure strategy Nash equilibria under SER must also be Nash equilibria under ESR, regardless of the utility functions the agents have.
    \item We prove that if all agents have quasiconvex utility functions, the pure strategy Nash equilibria are equivalent under SER and ESR. 
    \item We define Nash equilibria in blended settings where some agents are optimising for the SER criterion while others for ESR. We subsequently show that in such settings, pure strategy Nash equilibria can be retrieved from the trade-off game when only quasiconvex utility functions are considered.
    \item We construct an algorithm that computes a subset or all of the pure strategy Nash equilibria in a given MONFG with quasiconvex utility functions.
 \end{enumerate}

\section{Background}
\label{sec:background}
In this section, we introduce the necessary game-theoretical and mathematical background to ensure that this work is self-contained. We further provide an overview of the utility-based approach which we follow to arrive at our novel theorems.

\subsection{Normal-Form Games}
\label{sec:nfg}
Normal-Form Games (NFGs) present a concise approach for reasoning about stateless $n$-player interactions. In such games, the payoff for every player depends on the joint action that is selected. Formally \cite{leytonbrown2008essentials}:

\begin{definition}[Normal-Form Game]
\label{def:nfg}
A (finite, $n$-player) normal-form game is a tuple $(N, \mathcal{A}, p)$, where: 
\begin{itemize}
    \item $N$ is a finite set of $n$ players, indexed by $i$;
    \item $\mathcal{A} = A_1 \times \dots \times A_n$, where $A_i$ is a finite set of actions available to player $i$. Each vector $a = (a_1, \dots , a_n) \in \mathcal{A}$ is called an action profile;
    \item $p = (p_1, \dots , p_n)$ where $p_i : \mathcal{A} \to \mathbb{R}$ is a real-valued payoff function for player $i$, given an action profile.
\end{itemize}
\end{definition}



We can represent a 2-player NFG as a matrix where the payoff for each joint action is shown in the associated cell. Figure \ref{fig:prisoners-dilemma} shows a well-known example of such a matrix game, namely the prisoner's dilemma. As an illustration for the payoff mechanism, if the row player in this game opts to defect while the column player chooses to cooperate, the row player would receive a payoff of $0$ while the column player receives a payoff of $-3$.

\begin{figure}[h]
    \centering
    \begin{game}{2}{2}
                  & Cooperate        & Defect\\
        Cooperate       & $-1, -1$       & $-3, 0$\\
        Defect       & $0, -3$       & $-2, -2$
    \end{game}
    \caption{A matrix representation of the prisoner's dilemma as a normal-form game. Each cell holds the payoff for both players under the corresponding action profile.}
    \label{fig:prisoners-dilemma}
\end{figure}

In general, players are not restricted to only playing a single action, also known as a \emph{pure strategy}. Rather, they can introduce randomness into their strategy by playing a mix of actions according to some probability distribution, known as a \emph{mixed strategy}. We formally define this as follows:
\begin{definition}[Mixed strategy]
\label{def:mixed-strategy}
Let $(N, \mathcal{A}, p)$ be a normal-form game, and for any set $X$ let $\Pi(X)$ be the set of all probability distributions over $X$. Then the set of mixed strategies for player $i$ is $S_i = \Pi(A_i )$.
\end{definition}
Note that a pure strategy is a special case of a mixed strategy where one action is selected with probability 1. When dealing with mixed strategies, the associated payoffs are not directly clear from the payoff functions. To remedy this, we utilise the concept of \emph{expected payoff}. Informally, the expected payoff is the average payoff that would be obtained when playing the mixed strategy a large number of times.
\begin{definition}[Expected payoff of a mixed strategy]
\label{def:exp-payoff}
Given a normal-form game $(N, \mathcal{A}, p)$, the expected payoff $p_i$ for player $i$ of the mixed strategy profile $s = (s_1, \dots , s_n)$ is defined as
\[
p_i(s) = \sum_{a \in \mathcal{A}}p_i(a)\prod_{j = 1}^n s_j(a_j).
\]
\end{definition}

\subsection{Nash Equilibria}
We assume players to be rational decision makers, implying that they are aiming to optimise their expected payoff. Because a player's expected payoff depends on the strategy of other players as well, we can not optimise a strategy in isolation. To overcome this, we define interesting groups of outcomes called solution concepts. A fundamental solution concept that we use in this work is the Nash equilibrium (NE) \cite{nash1951non}. Intuitively, one can understand a Nash equilibrium as a joint strategy from which no player can unilaterally deviate while still improving their expected payoff.

To capture the incentive for deviating, we must first define the concept of best responses. A best response is a mixed strategy that maximises a player's expected payoff given all other players' strategies. Note that such a best response need not be unique. For the purpose of notation, we define $s_{-i} = (s_1, \cdots, s_{i-1}, s_{i+1}, \cdots, s_n)$ as the strategy profile $s$ without the strategy of player $i$, so that we may write $s = (s_i, s_{-i})$. Formally:
\begin{definition}[Best Response]
Player $i$’s best response to the strategy profile $s_{-i}$ is a mixed strategy $s^\ast_i \in S_i$ such that $p_i (s^\ast_i, s_{-i}) \geq p_i (s_i, s_{-i})$ for all strategies $s_i \in S_i$.
\end{definition}
Recall that in a Nash equilibrium, no rational player wishes to deviate from the joint strategy. We can thus define a Nash equilibrium as the strategy profile in which each strategy is a best response to all other strategies:
\begin{definition}[Nash Equilibrium]
A strategy profile $s = (s_1, \dots , s_n)$ is a Nash equilibrium if, for all agents $i$, $s_i$ is a best response to $s_{-i}$. 
\end{definition}
Note that if each $s_i$ in the NE is a pure strategy, the resulting NE is called a pure strategy Nash equilibrium.

We show the prisoner's dilemma example from Section \ref{sec:nfg} in Figure \ref{fig:ne-prisoners-dilemma} where the highlighted cell (Defect, Defect) represents the equilibrium. Observe that no player can improve their payoff by deviating, as playing another action can only reduce their expected payoff. 

\begin{figure}[h]
    \centering
    \begin{game}{2}{2}
                  & Cooperate        & Defect\\
        Cooperate   & $-1, -1$       & $-3, 0$\\
        Defect   & $0, -3$       & \nashhighlight{$-2, -2$}
    \end{game}
    \caption{The prisoner's dilemma where the highlighted cell represents the Nash equilibrium.}
    \label{fig:ne-prisoners-dilemma}
\end{figure}

We highlight the fact that, in general, Nash equilibria can be mixed strategies. An intuitive example of such a mixed strategy equilibrium occurs in the game of rock-paper-scissors \cite{roughgarden2016introduction}. Every strategy that is not a uniform distribution over the three options can be exploited by the opponent, leading to a single Nash equilibrium where agents play each action with probability $\frac{1}{3}$.

\subsection{Multi-Objective Normal-Form Games}
Multi-Objective Normal-Form Games (MONFGs) can intuitively be understood as the generalisation of (single-objective) NFGs to vectorial payoffs. We define this below \cite{radulescu2020utility}:
\begin{definition}[Multi-objective normal-form game]
\label{def:MONFG}
A (finite, $n$-player) multi-objective normal-form game is a tuple $(N, \mathcal{A}, \bm{p})$, with $d \geq 2$ objectives, where: 
\begin{itemize}
    \item $N$ is a finite set of $n$ players, indexed by $i$;
    \item $\mathcal{A} = A_1 \times \dots \times A_n$, where $A_i$ is a finite set of actions available to player $i$. Each vector $a = (a_1, \dots , a_n) \in \mathcal{A}$ is called an action profile;
    \item $\bm{p} = (\bm{p}_1, \dots , \bm{p}_n)$ where $\bm{p}_i : \mathcal{A} \to \mathbb{R}^d$ is the vectorial payoff function for player $i$, given an action profile.
\end{itemize}
\end{definition}

We denote player $i$'s payoff function $\bm{p}_i$ in bold to emphasize the fact that we are dealing with vectors rather than scalars. We can represent 2-player MONFGs as a matrix analogous to 2-player NFGs and show an example in Figure \ref{fig:monfg}. To illustrate, if the row player opts for action $B$ while the column player chooses $A$, the row player receives a payoff of $(1, 0)$ and the column player $(0, 1)$.

\begin{figure}[h]
    \centering
    \begin{game}{2}{2}
                  & $A$        & $B$\\
        $A$       & $(1, 1); (0, 0)$       & $(0, 1); (1, 0)$\\
        $B$       & $(1, 0); (0, 1)$       & $(0, 0); (1, 1)$
    \end{game}
    \caption{A matrix representation of a multi-objective normal-form game. Each cell holds the vectorial payoff for both players under the corresponding action profile.}
    \label{fig:monfg}
\end{figure}

The study of MONFGs, and specifically using a utility-based approach, has received much less attention than single-objective NFGs. In addition, work on MONFGs has been fragmented and different assumptions about the setting in which this model is used can lead to vastly different outcomes \cite{radulescu2020multiobjective}.

\subsection{Utility-Based Approach}
\label{sec:uba}
To deal with the vectorial payoffs in our setting, we adopt a utility-based approach \cite{hayes2022practical,roijers2017multi}. This approach assumes for each agent $i$ the existence of a utility function $u_i: \mathbb{R}^d \rightarrow \mathbb{R}$ that can map vectors to their scalar utility. An example of a utility function that is frequently used is the linear utility function which assigns a weight $w_{i,o}$ to each objective $o$ in the payoff function $\bm{p}_i$ and subsequently calculates the weighted sum over all objectives. 

\begin{equation}
    u_i\left(\bm{p}_i(s)\right) = \sum_{o \in O}w_{i,o}p_{i,o}(s),
\end{equation}

with $p_{i,o}(s)$ the payoff for player $i$ in objective $o$ and under strategy $s$. We note that in decision theory, utility functions are often assumed to be monotonically increasing \cite{roijers2017multi}. While reasonable, we do not make this assumption as none of our results rely on this property.


\subsubsection{Optimisation Criteria}
The utility-based approach has the advantage of scalar utilities that can be compared and ranked, which leads to a partial ordering. The drawback of this approach is that the utility function can be applied in two distinct ways. As an illustration, consider the problem of a commuter going to their work. This commuter cares about two objectives, namely minimising travel time ($p_1$) and optimising their level of comfort ($p_2$). On the one hand, it is possible that they want to optimise the utility they might derive from each individual trip. This leads to what is known as the Expected Scalarised Returns (ESR) criterion where we apply the utility function before taking the expectation \cite{roijers2018multiobjective,hayes2021distributional}.
\begin{eqnarray}
\label{eq:ESR}
   p_{i}(s) = \mathbb{E}\left[u_i\left(\bm{p}_i(s)\right)\right] = \sum_{a \in \mathcal{A}}u_i\left(\bm{p}_i(a)\right)\prod_{j = 1}^n s_j(a_j)
\end{eqnarray}
In Equation \ref{eq:ESR}, $p_{i}(s)$ is the scalar utility for player $i$ with utility function $u_i$ while following the joint strategy $s$. If the commuter is making this trip daily, it is also possible that they will aim to optimise the utility they can derive from the average payoff of multiple trips. This leads to the Scalarised Expected Returns (SER) where we perform the scalarisation after the expectation.
\begin{eqnarray}
\label{eq:SER}
    p_{i}(s) = u_i\left(\mathbb{E}\left[\bm{p}_i(s)\right]\right) = u_i\left(\sum_{a \in \mathcal{A}}\bm{p}_i(a)\prod_{j = 1}^n s_j(a_j)\right)
\end{eqnarray}

Under non-linear utility functions, the two criteria can provide different results and therefore the choice of criterion requires careful consideration \cite{roijers2013survey}. We illustrate this in Example \ref{exmp:esr-ser}.

\begin{example}
\label{exmp:esr-ser}
Consider again a commuter looking to optimise their travel time and level of comfort. It is conceivable that this commuter aims to be equally comfortable and fast by using the utility function,
\begin{equation}
    u(p_{1}, p_{2}) = p_{1} \cdot p_{2}.
\end{equation}
Assume they obtain a reward of $(0, 2)$ on day one and $(2, 0)$ on day two. If they optimise for the ESR criterion, this would result in the following utility,
\begin{equation}
\begin{split}
  \text{ESR } & = \frac{1}{2} \cdot u(0, 2) + \frac{1}{2} \cdot u(2, 0) \\
    & = \frac{1}{2} \cdot 0 + \frac{1}{2} \cdot 0 \\
    & = 0.  
\end{split}
\end{equation}
On the other hand, if they optimise for the utility that can be derived from several executions of the same strategy (SER), this would result in a different utility,
\begin{equation}
\begin{split}
  \text{SER } & = u\left(\frac{1}{2} \cdot 0 + \frac{1}{2} \cdot 2, \frac{1}{2} \cdot 2 + \frac{1}{2} \cdot 0\right) \\
    & = u(1, 1)\\
    & = 1. 
\end{split}
\end{equation}
\end{example}


\subsubsection{Motivation}
\label{sec:motivation}
The motivation for the ESR criterion in games is intuitive. Observe that Equation \ref{eq:ESR} is nearly identical to the expected payoff of a mixed strategy in single-objective games (see Definition \ref{def:exp-payoff}). The critical difference is that the player's utility function is applied to each payoff vector $\bm{p}_i(a)$. This reduces the multi-objective game to a single-objective game, where each player's payoff function in the single-objective game is defined as the composition between their utility function and vectorial payoff function. We often refer to the resulting single-objective game as the trade-off game, denoting the fact that the scalar payoff for each joint-action is a trade-off between the original objectives.

The motivation for SER in games rests on three pillars. First, SER is widely used in single-agent settings \cite{hayes2022practical}. As such, it makes sense to also study this criterion in multi-objective games. Second, existing solution criteria in multi-objective games where the utility functions of the players are not known beforehand, such as the Pareto-Nash equilibrium, are often based on the SER criterion \cite{ismaili2018existence,lozovanu2005multiobjective}. This solution concept denotes a strategy that results in an undominated payoff vector for both agents. If at a later time the utility function does become known, it is then applied to the expected payoff vector.

Finally, SER is in some cases the only appropriate optimisation criterion. We already illustrated that ESR and SER are not equivalent in general. As such, when a player genuinely strives to optimise the utility from expected returns, SER is the sole option. To motivate the use of these optimisation criteria, we present an example which features both criteria in one scenario.

\begin{example}
\label{exmp:blended}
\begin{figure}[h]
    \begin{subfigure}[t]{0.45\linewidth}
            \centering
            \begin{game}{2}{2}
                  & Cardio        & Lifting\\
                Cardio       & $(4, 1); (4, 1)$       & $(5, 1); (1, 4)$\\
                Lifting       & $(1, 4); (5, 1)$       & $(1, 3); (1, 3)$
            \end{game}
            \caption{The multi-objective reward vectors.}
            \label{fig:exmp-ser}
    \end{subfigure}%
    \qquad
    \begin{subfigure}[t]{0.45\linewidth}
            \centering
            \begin{game}{2}{2}
                  & Cardio        & Lifting\\
                Cardio       & \nashhighlight{$17; 4$}       & \nashhighlight{$26; 4$}\\
                Lifting       & $5; 5$       & $4; 3$
            \end{game}
            \caption{The ESR utilities.}
            \label{fig:exmp-esr}
    \end{subfigure}%
    \caption{An example MONFG for a blended setting. The utility functions for deriving the trade-off game in Figure \ref{fig:exmp-esr} are shown in Equation \ref{eq:blended-utilities}. The highlighted cells are pure strategy Nash equilibria under ESR.}
    \label{fig:exmp-blended}
\end{figure}

Consider a scenario where an apartment building has a shared gym for its residents containing a treadmill and some weightlifting equipment. Out of the group of residents, one is an avid athlete while the others are amateurs that sporadically go to the gym for a quick run on the treadmill. At any given day, the athlete is playing a game against another resident in selecting which equipment to use. The objectives in this game are to improve cardiovascular health and increase strength. Because the gym is small, the effectiveness of their workout takes a hit when both players opt for the same equipment as they will have to wait a while. The multi-objective payoff vectors are shown in Figure \ref{fig:exmp-ser} and the scalarised utilities in Figure \ref{fig:exmp-esr} from the utility functions,

\begin{align}
\label{eq:blended-utilities}
    u_1(p_1, p_2) &= p_1^2 + p_2 & u_2(p_1, p_2) &= p_1 \cdot p_2.
\end{align}

Intuitively, the utility function for player 1 represents their quadratic preference for running on the treadmill over doing strength training, while player 2, the athlete, aims to balance the two objectives equally given their desire for a sustainable workout routine.

As the first player comes out of the pool of residents who only occasionally go to the gym, they are naturally interested in optimising for their expected utility of each individual workout (ESR). The second player however, cares about sustaining a training schedule and thus aims to optimise the utility of their average payoff (SER).

A Nash equilibrium in this blended setting occurs when no player can unilaterally deviate and improve on their individual criterion. The only joint strategy that accomplishes this is when player 1 always goes running, while player 2 mixes their strategy uniformly over running and lifting weights.

For player 1, this is trivial as running dominates weightlifting. For player 2, we can calculate their best response by maximising the univariate function 
\begin{equation}
\begin{split}
    \left(4 \cdot x + 1 \cdot \left(1 - x \right)\right) \cdot \left(1\cdot x + 4\cdot \left(1 - x\right)\right) \quad \text{ with } 0 \leq x \leq 1,
\end{split}
\end{equation}
where $x$ represents their probability of running. Note that because strategies sum to 1, we can substitute their probability of weightlifting by $1-x$. The SER maximising strategy is to go running with probability 50\% and lifting with probability 50\% for a utility of $6.25$. This strategy strikes the optimal balance between cardio and lifting for a sustained workout routine.

It is interesting to highlight what would happen if we dismiss the SER optimisation criterion in this case. Observe that this would lead to the pure strategy Nash equilibria shown in Figure \ref{fig:exmp-esr}. It is clear that either always running or always lifting is not an adequate trade-off for the athlete as it hinders their progress and can even result in injuries. Moreover, the best outcome for the athlete under ESR, with a utility of five, occurs when player 1 goes lifting while they go running. However, this is clearly suboptimal as there is no balance in the training schedule. Under SER however, they obtain a higher utility of $6.25$ when simply playing the Nash equilibrium.
\end{example}
For additional motivating examples prescribing a specific optimisation criterion, we refer to R\u{a}dulescu \cite{radulescu2021decision}, R\u{a}dulescu et al. \cite{radulescu2020multiobjective} and Roijers et al. \cite{roijers2018multiobjective}.

\subsection{Nash Equilibria in MONFGs}
The introduction of utility functions in Section \ref{sec:uba} allows us to frame Nash equilibria in MONFGs in terms of their utility \cite{radulescu2020utility}. We first define an NE under the ESR criterion:

\begin{definition}[Nash equilibrium for expected scalarised returns]
\label{def:MOMA-NE-ESR}
A joint strategy $s^{NE}$ is a Nash equilibrium in an MONFG under the expected scalarised returns criterion if for all players $i \in N$ and all alternative strategies $s_i \in S_i$:
\[
\mathbb{E}\left[u_i\left(\bm{p}_i\left(s_{i}^{NE}, s_{-i}^{NE}\right) \right)\right] \geq  \mathbb{E}\left[u_i\left(\bm{p}_i\left(s_i, s_{-i}^{NE}\right) \right)\right]
\label{eqn:ne_esr_gt}
\]
\noindent i.e. $s^{NE}$ is a Nash equilibrium under ESR if no player can increase the \emph{expected utility of its payoffs} by deviating unilaterally from $s^{NE}$.
\end{definition}

\noindent We provide an analogous definition for the SER criterion:

\begin{definition}[Nash equilibrium for scalarised expected returns]
\label{def:MOMA-NE-SER}
A joint strategy $s^{NE}$ is a Nash equilibrium in an MONFG under the scalarised expected returns criterion if for all players $i \in N$ and all alternative strategies $s_i \in S_i$:
\[
u_i\left(\mathbb{E}\left[ \bm{p}_i \left(s^{NE}_i, s^{NE}_{-i} \right)\right]\right) \geq u_i\left(\mathbb{E} \left[\bm{p}_i \left(s_i, s^{NE}_{-i} \right)\right]\right) 
\]
\noindent i.e. $s^{NE}$ is a Nash equilibrium under SER if no player can increase the \emph{utility of its expected payoffs} by deviating unilaterally from $s^{NE}$.
\end{definition}

While it has been shown that in finite NFGs a mixed strategy Nash equilibrium always exists \cite{nash1951non}, in an MONFG when optimising for the SER criterion and with non-linear utility function this guarantee does not hold \cite{radulescu2020utility}. The fact that existence is not guaranteed in MONFGs under SER can appear peculiar. In Section \ref{sec:existence-ne} however, we specify the novel perspective that any such game can be reduced to a single-objective normal-form game with an infinite set of pure strategies. This allows us to frame an alternative intuition for the non-existence result of R\u{a}dulescu et al. \cite{radulescu2020utility} and opens up interesting avenues for future work.

\subsection{Convexity and Quasiconvexity}
In this work we consider different classes of functions, namely convex and quasiconvex functions as well as concave and quasiconcave functions, due to their useful properties. We offer below the formal definition of a convex and concave function. Note that each definition can be modified to denote strict (quasi)convexity, respectively (quasi)concavity, by replacing $\leq$ with $<$, respectively $\geq$ with $>$, and considering $\lambda \in (0, 1)$.

\begin{definition}
\label{def:convex}
A function $f: \mathbb{R}^n \to \mathbb{R}$ is convex if its domain is a convex set and for all $\bm{x}_1, \bm{x}_2$ in its domain, and all $\lambda \in [0, 1]$, we have 
\begin{equation*}
    f\left(\lambda \bm{x}_1 + (1-\lambda)\bm{x}_2\right) \leq \lambda f(\bm{x}_1) + (1-\lambda)f(\bm{x}_2).
\end{equation*}
\end{definition}

\begin{definition}
\label{def:concave}
A function $f: \mathbb{R}^n \to \mathbb{R}$ is concave if its domain is a convex set and for all $\bm{x}_1, \bm{x}_2$ in its domain, and all $\lambda \in [0, 1]$, we have
\begin{equation*}
    f(\lambda \bm{x}_1 + (1-\lambda)\bm{x}_2) \geq \lambda f(\bm{x}_1) + (1-\lambda)f(\bm{x}_2).
\end{equation*}
\end{definition}

Intuitively, convex functions can be considered functions for which the line segment between any two points lies above the graph. Conversely, the line segment between any two points on a concave function will always lie below the graph. To visually illustrate these definitions, we show an example in Figure \ref{fig:convex-concave}.

\begin{figure}[h]
    \centering
    \begin{subfigure}[b]{0.45\textwidth}
        \includegraphics[width=\linewidth]{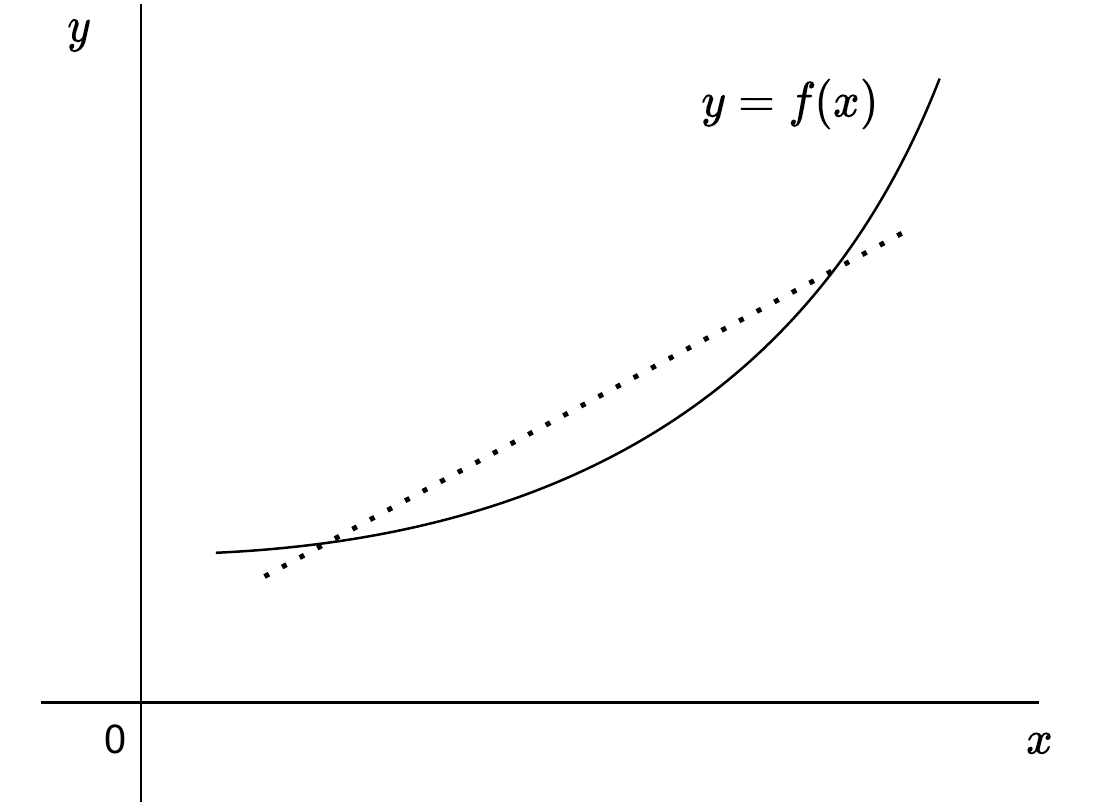}
        \caption{A convex function.}
        \label{fig:convex}
    \end{subfigure}%
    \qquad
    \begin{subfigure}[b]{0.45\textwidth}
        \includegraphics[width=\linewidth]{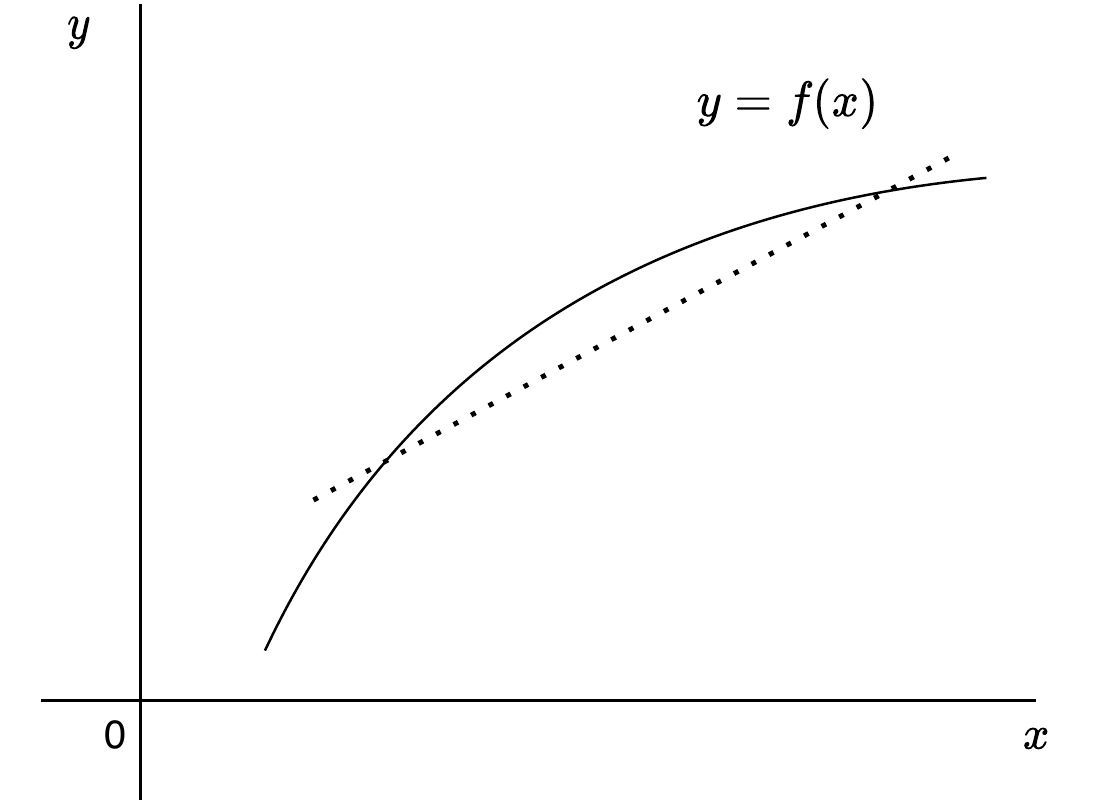}
        \caption{A concave function.}
        \label{fig:concave}
    \end{subfigure}%
    \caption{Examples of a (monotonically increasing) convex and concave function in $\mathbb{R}$. Note that the dotted lines illustrate that the line segment between any two points on such functions always lies above or below the graph between them.}
    \label{fig:convex-concave}
\end{figure}

Because the definition of such functions is relatively strict, a broader class of functions can be constructed in the form of quasiconvex and quasiconcave functions.

\begin{definition}
\label{def:quasiconvex}
A function $f: \mathbb{R}^n \to \mathbb{R}$ is quasiconvex if its domain is a convex set and for all $\bm{x}_1, \bm{x}_2$ in its domain, and all $\lambda \in [0, 1]$, we have
\begin{equation*}
    f(\lambda \bm{x}_1 + (1-\lambda)\bm{x}_2) \leq \max\left\{f(\bm{x}_1), f(\bm{x}_2)\right\}.
\end{equation*}
\end{definition}
Observe that the definition of convex functions imply that they are also quasiconvex. However, quasiconvexity does not necessarily imply convexity. We define a quasiconcave function analogously as follows:
\begin{definition}
\label{def:quasiconcave}
A function $f: \mathbb{R}^n \to \mathbb{R}$ is quasiconcave if its domain is a convex set and for all $\bm{x}_1, \bm{x}_2$ in its domain, and all $\lambda \in [0, 1]$, we have
\begin{equation*}
    f(\lambda \bm{x}_1 + (1-\lambda)\bm{x}_2) \geq \min\left\{f(\bm{x}_1), f(\bm{x}_2)\right\}.
\end{equation*}
\end{definition}
Here too, concavity implies quasiconcavity while the other way around does not hold. As such, these classes of functions can be seen as less restrictive. We show an example of a convex function, a quasiconvex function and an arbitrary function which possesses neither property in Figure \ref{fig:comparison}.

\begin{figure}[h]
    \centering
    \begin{subfigure}[t]{0.3\textwidth}
        \includegraphics[width=\linewidth]{convex.pdf}
        \caption{A convex function.}
        \label{fig:comparison-convex} 
    \end{subfigure}%
    \quad
    \begin{subfigure}[t]{0.3\textwidth}
        \includegraphics[width=\linewidth]{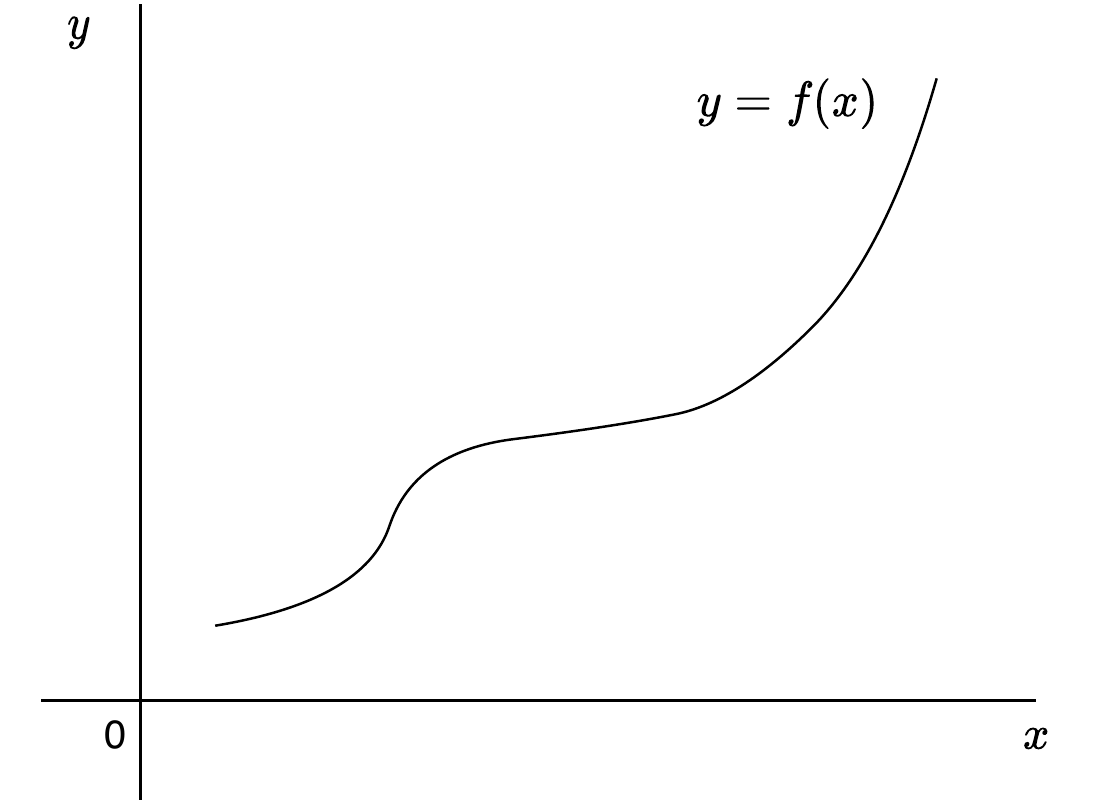}
        \caption{A quasiconvex function which is not convex.}
        \label{fig:comparison-qqoncvex}
    \end{subfigure}%
    \quad
    \begin{subfigure}[t]{0.3\textwidth}
        \includegraphics[width=\linewidth]{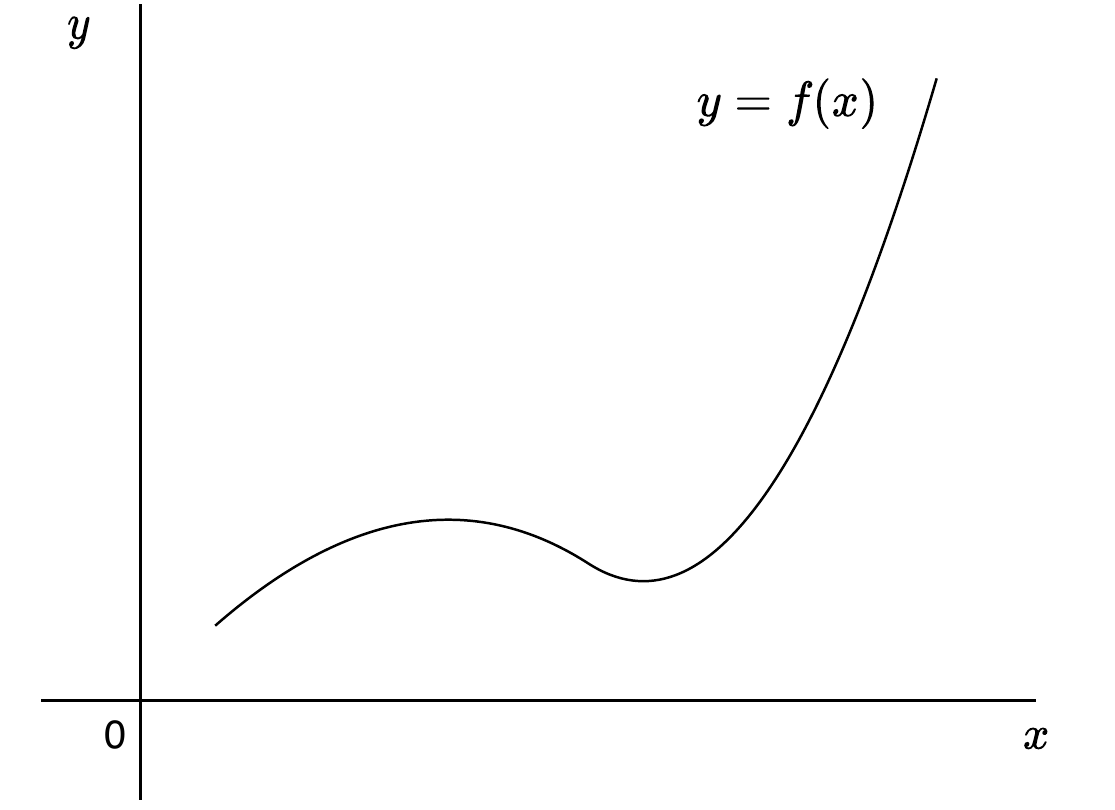}
        \caption{A function that is neither convex nor quasiconvex.}
        \label{fig:comparison-neither}
    \end{subfigure}%
    \caption{Examples of a convex function, a quasiconvex function that is not convex and an arbitrary function which is not convex nor quasiconvex in $\mathbb{R}$.}
    \label{fig:comparison}
\end{figure}

It is important to note that any function $f: \mathbb{R} \to \mathbb{R}$ that is monotonically increasing is also both quasiconvex and quasiconcave. However, this property does not generalise to multivariate functions, i.e. $f: \mathbb{R}^n \to \mathbb{R}$.

\section{Existence of Nash Equilibria in MONFGs}
\label{sec:existence}
In this section, we provide a sufficient guarantee for Nash equilibria to exist in MONFGs when optimising for the SER criterion. While it has previously been shown that no NE need exist in this setting \cite{radulescu2020utility}, we show that restricting players to use continuous quasiconcave utility functions leads to this novel result. Furthermore, we find that restricting utility functions to be strictly convex is not a sufficient guarantee which we demonstrate by means of a counterexample.

\subsection{Existence Guarantees For Continuous Quasiconcave Utility Functions}
When all players in an arbitrary finite MONFG are assumed to use a continuous quasiconcave utility function, we show that a mixed strategy Nash equilibrium is guaranteed to exist. To obtain this result, we construct an auxiliary lemma showing that every MONFG with continuous utility functions can be reduced to an equivalent continuous game.

We follow the definition of continuous games from Stein et al. \cite{stein2008separable}. Concretely, in a continuous game every player has a nonempty compact metric space of pure strategies and a continuous utility function.

\begin{lemma}
\label{lemma:continuous-reduction}
Every multi-objective normal-form game with continuous utility functions can be reduced to a continuous game.
\end{lemma}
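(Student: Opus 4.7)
The plan is to exhibit an explicit reduction: given an MONFG $(N,\mathcal{A},\bm{p})$ with continuous $u_i$, I will construct a game $(N, \mathcal{S}, \tilde{u})$ whose \emph{pure} strategy set for player $i$ is the simplex of mixed strategies $S_i = \Pi(A_i)$ of the original game, and whose scalar utility $\tilde{u}_i$ is the SER payoff (with the parallel construction using the ESR payoff). The task then splits into two verifications matching the two requirements in the definition of a continuous game quoted from Stein et al.: each $S_i$ is a nonempty compact metric space, and each $\tilde{u}_i$ is continuous on $\mathcal{S} = S_1 \times \cdots \times S_n$.

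For the first requirement, I would note that since $A_i$ is finite and nonempty, $S_i = \Pi(A_i)$ can be identified with the standard simplex $\{x \in \mathbb{R}^{|A_i|} : x_k \geq 0,\ \sum_k x_k = 1\}$ in Euclidean space. This is nonempty (the uniform distribution lies in it), closed and bounded in $\mathbb{R}^{|A_i|}$ hence compact by Heine--Borel, and a metric space under the induced Euclidean metric. The product $\mathcal{S}$ inherits the same properties under the product metric.

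For the second requirement, I would first observe that the expected payoff vector
\begin{equation*}
\bm{E}_i(s) \;=\; \mathbb{E}\bigl[\bm{p}_i(s)\bigr] \;=\; \sum_{a \in \mathcal{A}} \bm{p}_i(a)\prod_{j=1}^{n} s_j(a_j)
\end{equation*}
is a polynomial (coordinate-wise) in the entries of the $s_j$'s, and polynomials are continuous. Under SER we have $\tilde{u}_i(s) = u_i(\bm{E}_i(s))$, which is a composition of the continuous $u_i$ with the continuous $\bm{E}_i$, hence continuous. Under ESR the reduced utility $\sum_{a} u_i(\bm{p}_i(a))\prod_j s_j(a_j)$ is itself a polynomial in the strategy probabilities (with fixed scalar coefficients $u_i(\bm{p}_i(a))$), so continuity is immediate and in fact requires no assumption on $u_i$ beyond being well-defined on the finite set of payoff vectors.

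I do not expect a serious obstacle here: the statement is essentially bookkeeping, and the only subtlety is making explicit which metric one equips $S_i$ with and verifying that mixed strategies in the original MONFG legitimately serve as pure strategies in the reduced game. If the authors intend this lemma primarily to feed into a Glicksberg/Debreu-style existence theorem for the SER criterion (as the section title suggests), the SER formulation of $\tilde{u}_i$ is the one that matters, and the continuity argument reduces to ``continuous composed with polynomial is continuous,'' which is the only step that uses the hypothesis.
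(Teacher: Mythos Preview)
Your proposal is correct and follows essentially the same approach as the paper: define the pure strategy sets of the reduced game as the mixed-strategy simplices $S_i=\Pi(A_i)$, take the SER payoff $u_i(\mathbb{E}[\bm{p}_i(s)])$ as the reduced utility, and argue continuity via composition of the continuous $u_i$ with the (polynomial) expected-payoff map. The paper's proof is slightly terser on compactness (it just notes the simplex property) and does not include the ESR aside, but the substance is identical.
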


\begin{proof} 
Let ${G = (N, \mathcal{A}, \bm{p})}$ be a finite multi-objective normal-form game and assume a continuous utility function $u_i$ is given for each player $i$. We construct a continuous game, $\hat{G} = (N, \mathcal{S}, \hat{p})$ with the same set of players $N$. 

We define player $i$'s strategy set $S_i$ in $\hat{G}$ as the set of mixed strategies over their action set $A_i$ in $G$,
\begin{equation}
    \forall i, S_i = \Pi(A_i).
\end{equation}
This ensures that every strategy set in $\hat{G}$ is a non-empty compact metric space as each set of mixed strategies is a simplex. 
 
Next, we define each player's payoff function $\hat{p}_i$ in $\hat{G}$ to be equivalent to the scalarised expected returns in $G$,
\begin{equation}
    \forall i, \hat{p}_i(s_i, s_{-i})= u_i\left(\mathbb{E}\left[\bm{p}_i\left(s_i, s_{-i}\right)\right]\right).
\end{equation} 

From Equation \ref{eq:SER} we know that each expected payoff function $\bm{p}_i$ is defined as,

\begin{equation}
    \mathbb{E}\left[ \bm{p}_i(s)\right] = \sum_{a \in \mathcal{A}}\bm{p}_i(a)\prod_{j = 1}^n s_j(a_j).
\end{equation}

Observe first that each expected payoff function is continuous and that $u_i$ was assumed to be continuous as well. It is known that a composition of continuous functions is continuous itself. As such, each $\hat{p}_i$ is continuous and $\hat{G}$ is a continuous game.
\end{proof}

This reduction is interesting in its own right and sheds a new light on earlier results obtained by R\u{a}dulescu et al. \cite{radulescu2020utility}. Specifically, it is known that continuous games need not have a Pure Strategy Nash Equilibrium (PSNE) \cite{fudenberg1991game}. Recall that our reduction from MONFGs to continuous games makes every mixed strategy in the MONFG a pure strategy in the continuous game. Informally, this implies that a PSNE in the continuous game would be a mixed-strategy NE in the MONFG. It is thus not surprising that there exist MONFGs which, when reduced, fall in the category of continuous game without PSNE and thus have no mixed-strategy NE themselves.

In continuous games, it is known that if all payoff functions are continuous and quasiconcave, a PSNE must exist \cite{fudenberg1991game}. We can leverage Lemma \ref{lemma:continuous-reduction} to show that the same restriction on utility functions in MONFGs ensures the existence of a mixed-strategy NE. 

\begin{theorem}
\label{th:ne-quasiconcave}
Consider a (finite, $n$-player) multi-objective normal-form game where players are optimising for the scalarised expected returns criterion. If each player has a continuous quasiconcave utility function, a mixed strategy Nash equilibrium is guaranteed to exist.
\end{theorem}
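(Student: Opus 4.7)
The plan is to apply Lemma \ref{lemma:continuous-reduction} to reduce the given MONFG $G$ to an equivalent continuous game $\hat{G} = (N, \mathcal{S}, \hat{p})$, and then invoke the classical Debreu--Glicksberg--Fan existence theorem for continuous games: if each player's strategy set is a nonempty compact convex subset of a topological vector space and each payoff $\hat{p}_i$ is continuous in the joint strategy and quasiconcave in that player's own strategy, a pure strategy Nash equilibrium in $\hat{G}$ exists. Since a pure strategy in $\hat{G}$ is, by construction, a mixed strategy in $G$, this PSNE is exactly a mixed strategy Nash equilibrium of $G$ under SER.

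To apply the existence theorem I would verify its three hypotheses for $\hat{G}$. First, convexity and compactness of the strategy sets: each $S_i = \Pi(A_i)$ is a simplex in $\mathbb{R}^{|A_i|}$, hence a nonempty compact convex metric space. Second, continuity of $\hat{p}_i$ in the joint strategy: this is already established in Lemma \ref{lemma:continuous-reduction}, where $\hat{p}_i$ was shown to be the composition of the continuous multilinear map $s \mapsto \mathbb{E}[\bm{p}_i(s)]$ with the continuous utility function $u_i$. Third, quasiconcavity of $\hat{p}_i$ in $s_i$ for every fixed $s_{-i}$, which is the only nontrivial step.

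For the quasiconcavity step, I would fix $s_{-i}$ and observe that the map $L : s_i \mapsto \mathbb{E}[\bm{p}_i(s_i, s_{-i})]$ is affine in $s_i$, since $\mathbb{E}[\bm{p}_i(s_i, s_{-i})] = \sum_{a_i \in A_i} s_i(a_i) \, \bm{v}_{a_i}$ where each $\bm{v}_{a_i} = \sum_{a_{-i}} \bm{p}_i(a_i, a_{-i}) \prod_{j \neq i} s_j(a_j)$ depends only on the fixed $s_{-i}$. Then for any $s_i, s_i' \in S_i$ and $\lambda \in [0,1]$,
\begin{equation*}
\hat{p}_i\bigl(\lambda s_i + (1-\lambda) s_i',\, s_{-i}\bigr) = u_i\bigl(\lambda L(s_i) + (1-\lambda) L(s_i')\bigr) \geq \min\bigl\{u_i(L(s_i)),\, u_i(L(s_i'))\bigr\},
\end{equation*}
where the inequality uses the quasiconcavity of $u_i$. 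Hence $\hat{p}_i(\cdot, s_{-i})$ is quasiconcave.

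The main obstacle, or rather the one conceptual step that must be handled cleanly, is precisely this composition argument: quasiconcavity is not in general preserved under composition, but it is preserved when the inner map is affine, and affineness of $s_i \mapsto \mathbb{E}[\bm{p}_i(s_i, s_{-i})]$ is exactly what multilinearity of the expected payoff in each player's strategy delivers. Once these hypotheses are verified, the Debreu--Glicksberg--Fan theorem yields a PSNE of $\hat{G}$, which by the reduction is a mixed strategy Nash equilibrium of $G$ under SER, completing the proof.
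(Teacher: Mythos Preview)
Your proposal is correct and follows essentially the same approach as the paper: reduce via Lemma~\ref{lemma:continuous-reduction} to a continuous game, verify the three hypotheses of the Debreu--Glicksberg--Fan theorem (the paper cites the same result via Fudenberg and Tirole), and translate the resulting PSNE back to a mixed-strategy SER equilibrium. Your treatment of the quasiconcavity step via affineness of $s_i \mapsto \mathbb{E}[\bm{p}_i(s_i, s_{-i})]$ is exactly the paper's ``linearity of expectation'' argument, just phrased a bit more cleanly.
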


\begin{proof}
Let ${G = (N, \mathcal{A}, \bm{p})}$ be a finite multi-objective normal-form game and assume a continuous quasiconcave utility function $u_i$ is given for each player $i$. Lemma \ref{lemma:continuous-reduction} states that we can reduce $G$ to an equivalent continuous game $\hat{G}$ where every payoff function $\hat{p}_i(s_i, s_{-i})= u_i\left(\mathbb{E}\left[\bm{p}_i\left(s_i, s_{-i}\right)\right]\right)$. 

We can guarantee a pure strategy Nash equilibrium in $\hat{G}$ if \cite{fudenberg1991game},
\begin{enumerate}
    \item Each $S_i$ is a non-empty, compact and convex metric space;
    \item All $\hat{p}_i(s_i, s_{-i})$ are continuous in $s$;
    \item All $\hat{p}_i(s_i, s_{-i})$ are quasiconcave in $s_i$.
\end{enumerate}
Observe that (1) and (2) follow immediately from Lemma \ref{lemma:continuous-reduction} as $\hat{G}$ is a continuous game. We need only show that each $\hat{p}_i$ is quasiconcave if $u_i$ is quasiconcave.

Because $u_i$ is quasiconcave for all vectors $\bm{x}_1, \bm{x}_2$ in its domain, and all $\lambda \in [0, 1]$, we have
\begin{equation}
    u_i(\lambda \bm{x}_1 + (1-\lambda)\bm{x}_2) \geq \min\{u_i(\bm{x}_1), u_i(\bm{x}_2)\}.
\end{equation}
Let $\bm{x}_1 = \mathbb{E}\left[\bm{p}_i\left(s_1, s_{-i}\right)\right]$ and $\bm{x}_2 = \mathbb{E}\left[\bm{p}_i\left(s_2, s_{-i}\right)\right]$. Then,
\begin{equation}
    u_i\left(\lambda\mathbb{E}\left[\bm{p}_i\left(s_1, s_{-i}\right)\right] + (1-\lambda)\mathbb{E}\left[\bm{p}_i\left(s_2, s_{-i}\right)\right]\right) \geq \min\left\{u_i\left(\mathbb{E}\left[\bm{p}_i\left(s_1, s_{-i}\right)\right]\right), u_i\left(\mathbb{E}\left[\bm{p}_i\left(s_2, s_{-i}\right)\right]\right)\right\}.
\end{equation}
By the linearity of expectation we have,
\begin{equation}
    \lambda\mathbb{E}\left[\bm{p}_i\left(s_1, s_{-i}\right)\right] + (1-\lambda)\mathbb{E}\left[\bm{p}_i\left(s_2, s_{-i}\right)\right] =  \mathbb{E}\left[\bm{p}_i\left(\lambda s_1 + (1-\lambda)s_2, s_{-i}\right)\right].
\end{equation}
We may now state that,
\begin{equation}
\begin{split}
    u_i(\mathbb{E}\left[\bm{p}_i\left(\lambda s_1 + (1-\lambda)s_2, s_{-i}\right)\right]) & \geq \min\{u_i(\mathbb{E}\left[\bm{p}_i\left(s_1, s_{-i}\right)\right]), u_i(\mathbb{E}\left[\bm{p}_i\left(s_2, s_{-i}\right)\right])\} \\
    \implies \hat{p}_i\left(\lambda s_1 + (1-\lambda)s_2, s_{-i}\right) & \geq \min\{\hat{p}_i\left(s_1, s_{-i}\right), \hat{p}_i\left(s_2, s_{-i}\right)\},
\end{split}
\end{equation}
which is the requirement for quasiconcavity of $\hat{p}_i$ in $s_{i}$. As we satisfy the three conditions, we know that $\hat{G}$ must have a pure strategy Nash equilibrium. We now show that this pure strategy Nash equilibrium in $\hat{G}$ corresponds to a Nash equilibrium in $G$.

Observe that a pure strategy Nash equilibrium $s^{NE}$ in $\hat{G}$ implies that 
\begin{equation}
    \hat{p}_i\left(s_i^{NE}, s_{-i}^{NE}\right) \geq \hat{p}_i\left(s_i, s_{-i}^{NE}\right),
\end{equation}
for all players $i$ and alternative pure strategies $s_i$. Given our definition of $\hat{G}$, we can substitute $\hat{p}_i$ and arrive at the following equation:
\begin{equation}
    u_i\left(\mathbb{E}\left[ \bm{p}_i \left(s^{NE}_i, s^{NE}_{-i} \right)\right]\right) \geq u_i\left(\mathbb{E}\left[ \bm{p}_i \left(s_i, s^{NE}_{-i} \right)\right]\right), 
\end{equation}
for all players $i$ and alternative strategies $s_i$. Because each strategy $s_i \in S_i$ is defined as a mixed strategy in $G$, we obtain precisely the definition of a Nash equilibrium in an MONFG under SER. 
\end{proof}

It is worth briefly discussing what the proposed restriction intuitively means. It is known from utility theory that having a quasiconcave utility function leads to having convex preferences over the objectives. This can be understood as a player that favours an average return over their objectives more than an extreme return on just one objective. This is a sensible restriction, as quasiconcave utility functions are often considered in an economical context and are known as generating well-behaved preferences \cite{varian2014intermediate}.

\subsection{Non-Existence For Strict Quasiconvex Utility Functions}
Given the positive result, it is tempting to assume an analogous restriction to continuous quasiconvex utility functions. However, we find that this is not a sufficient restriction. In fact, the more stringent notion of strict convexity is also not a sufficient guarantee.

To show that existence is not guaranteed under these restrictions, we first introduce three necessary lemmas. Lemma \ref{lemma:optimal-ps-ms} demonstrates that when employing only quasiconvex utility functions, a pure strategy is always a best response. Lemma \ref{lemma:jensen-strict-qqconv}, Jensen's inequality for strictly quasiconvex functions, is a straightforward extension of prior results but for which we were not able to find a reference. Lastly, Lemma \ref{lemma:equal-payoff} combines the previous statements to show that whenever all utility functions are \emph{strictly} quasiconvex, a mixed strategy can only be a best response if the expected payoff vector is equal for all actions that are played with non-zero probability.

\begin{lemma}[Optimality of a pure strategy as a best response]
\label{lemma:optimal-ps-ms}
Consider a (finite, $n$-player) multi-objective normal-form game where players are optimising for the scalarised expected returns criterion. If each player $i$ has a quasiconvex utility function $u_i$, there must always exist an action $a^\ast_i$ that is a best response to the strategy profile of all other players $s_{-i}$ such that for all alternative strategies $s_i \in S_i$,

\begin{equation*}
    u_i(\mathbb{E}\left[\bm{p}_i (a^\ast_i, s_{-i})\right]) \geq u_i(\mathbb{E}\left[\bm{p}_i(s_i, s_{-i})\right]).
\end{equation*}

\end{lemma}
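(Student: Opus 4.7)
The plan is to reduce the claim to a purely analytic fact about quasiconvex functions applied to convex combinations, then exploit finiteness of $A_i$ to extract the maximising action. Fix player $i$ and the opponents' profile $s_{-i}$. For any mixed strategy $s_i \in S_i$ with support actions $a_{i,1}, \ldots, a_{i,k}$ and probabilities $\lambda_1, \ldots, \lambda_k$, the linearity of expectation gives
\begin{equation*}
    \mathbb{E}\left[\bm{p}_i(s_i, s_{-i})\right] = \sum_{\ell = 1}^{k} \lambda_\ell \, \mathbb{E}\left[\bm{p}_i(a_{i,\ell}, s_{-i})\right],
\end{equation*}
so the expected payoff vector of $s_i$ is a convex combination of the expected payoff vectors of the pure deviations $a_{i,\ell}$ against $s_{-i}$. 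The task thus becomes bounding $u_i$ evaluated at this convex combination.

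The key lemma I would use (and prove inline by induction on $k$) is that quasiconvexity as defined in Definition \ref{def:quasiconvex} extends from binary to arbitrary finite convex combinations: for any $\bm{x}_1, \ldots, \bm{x}_k$ in the domain and any weights $\lambda_\ell \geq 0$ summing to $1$,
\begin{equation*}
    u_i\!\left(\sum_{\ell=1}^{k} \lambda_\ell \bm{x}_\ell\right) \;\leq\; \max_{\ell \in \{1,\ldots,k\}} u_i(\bm{x}_\ell).
\end{equation*}
The base case $k=2$ is Definition \ref{def:quasiconvex}. For the inductive step, if $\lambda_1 < 1$ write the sum as $\lambda_1 \bm{x}_1 + (1-\lambda_1)\bm{y}$ where $\bm{y} = \sum_{\ell \geq 2} \tfrac{\lambda_\ell}{1-\lambda_1} \bm{x}_\ell$, apply the binary inequality, and then the inductive hypothesis to $u_i(\bm{y})$. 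The case $\lambda_1 = 1$ is trivial.

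Applying this to $\bm{x}_\ell = \mathbb{E}[\bm{p}_i(a_{i,\ell}, s_{-i})]$ yields
\begin{equation*}
    u_i\!\left(\mathbb{E}\left[\bm{p}_i(s_i, s_{-i})\right]\right) \;\leq\; \max_{\ell} u_i\!\left(\mathbb{E}\left[\bm{p}_i(a_{i,\ell}, s_{-i})\right]\right).
\end{equation*}
Since $A_i$ is finite (Definition \ref{def:MONFG}), the maximum $a^\ast_i \in \arg\max_{a_i \in A_i} u_i(\mathbb{E}[\bm{p}_i(a_i, s_{-i})])$ exists, and by the inequality above it weakly dominates every mixed strategy $s_i$. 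This gives the desired pure best response.

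I do not expect a serious obstacle here: the only non-mechanical step is the inductive extension of quasiconvexity to $k$-fold convex combinations, and even that is a standard exercise. The main thing to be careful about is handling the edge cases (mixed strategies of full versus partial support, and the degenerate $\lambda_1 = 1$ case in the induction) so that the argument is clean rather than only morally correct.
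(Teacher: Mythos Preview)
Your proposal is correct and follows essentially the same approach as the paper: both reduce the claim to the Jensen-type inequality for quasiconvex functions, $u_i(\sum_\ell \lambda_\ell \bm{x}_\ell) \leq \max_\ell u_i(\bm{x}_\ell)$, applied to the expected payoff vectors of the pure actions. The only difference is that the paper cites this inequality as a known property (referencing \cite{dragomir2012jensen}) whereas you prove it inline by induction, making your version more self-contained.
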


\begin{proof}

A mixed strategy $s_i$ assigns a probability to each $a \in A_i$, such that

\begin{equation}
    \sum_{j=1}^{|A_i|} \alpha_j = 1 \text{ and } \alpha_j \geq 0.
\end{equation}

\noindent The expected reward vector for $s_i$ is further defined as,
\begin{equation}
    \alpha_1 \bm{p}_i(a_1) + \cdots + \alpha_k \bm{p}_i(a_k) = \alpha_1 \bm{x}_1 + \cdots + \alpha_k \bm{x}_k.
\end{equation}
Note that the expected reward vector is thus a convex combination of the payoff vector for each action.

We claim that when assuming only quasiconvex utility functions, a pure strategy exists which is a best response. In other words, when applying a quasiconvex function $u$ to a convex combination of vectors, this is bounded by the maximum of $u$ applied to a single vector. Formally,
\begin{equation}
u(\alpha_1 \bm{x}_1 + \cdots + \alpha_m \bm{x}_m) \leq \max_{k} u(\bm{x}_k).
\end{equation}

In fact, this is a known property of quasiconvex functions \cite{dragomir2012jensen}, which proves Lemma \ref{lemma:optimal-ps-ms}.
\end{proof}

\begin{lemma}[Jensen's inequality for strictly quasiconvex functions]
\label{lemma:jensen-strict-qqconv}
Let $f$ be a strictly quasiconvex function and the points $\bm{x}_1, \cdots, \bm{x}_n$ in their domain not all equal. Then,
\begin{equation*}
    f(\lambda_1 \bm{x}_1 + \cdots + \lambda_n \bm{x}_n) < \max_{k} f(\bm{x}_k) \quad \text{where  } \lambda_1, \cdots, \lambda_n \in (0, 1), \sum_{k=1}^n \lambda_k = 1.
\end{equation*}
\end{lemma}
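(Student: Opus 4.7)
The plan is to proceed by induction on $n$, reducing an $n$-term convex combination to a two-term one so that the defining inequality for strict quasiconvexity applies. The base case $n=2$ is essentially immediate: if $\bm{x}_1 \neq \bm{x}_2$ (which must hold since we assume not all points are equal), then by definition of strict quasiconvexity, $f(\lambda_1 \bm{x}_1 + \lambda_2 \bm{x}_2) < \max\{f(\bm{x}_1), f(\bm{x}_2)\}$ for any $\lambda_1 \in (0,1)$ with $\lambda_2 = 1-\lambda_1$.

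For the inductive step, assume the statement holds for $n$ terms and consider $n+1$ points $\bm{x}_1, \ldots, \bm{x}_{n+1}$ not all equal, with weights $\lambda_1, \ldots, \lambda_{n+1} \in (0,1)$ summing to $1$. I would rewrite
\begin{equation*}
\sum_{k=1}^{n+1}\lambda_k \bm{x}_k = (1-\lambda_{n+1})\,\bm{y} + \lambda_{n+1}\,\bm{x}_{n+1}, \qquad \bm{y} := \sum_{k=1}^{n}\frac{\lambda_k}{1-\lambda_{n+1}}\,\bm{x}_k,
\end{equation*}
noting that the coefficients defining $\bm{y}$ lie in $(0,1)$ (using $\lambda_k + \lambda_{n+1} < 1$, which holds because there is at least one further positive weight when $n+1 \geq 3$) and sum to $1$, so the inductive hypothesis is applicable to $\bm{y}$ whenever $\bm{x}_1, \ldots, \bm{x}_n$ are not all equal.

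The main work is then a case analysis. If $\bm{x}_1 = \cdots = \bm{x}_n$, then $\bm{y}$ equals this common value, which differs from $\bm{x}_{n+1}$ by hypothesis, so strict quasiconvexity directly gives $f((1-\lambda_{n+1})\bm{y} + \lambda_{n+1}\bm{x}_{n+1}) < \max\{f(\bm{y}), f(\bm{x}_{n+1})\} \leq \max_k f(\bm{x}_k)$. Otherwise the inductive hypothesis yields $f(\bm{y}) < \max_{k\leq n} f(\bm{x}_k)$. If additionally $\bm{y} = \bm{x}_{n+1}$, the combined point equals $\bm{y}$ and we are done by the IH. If $\bm{y} \neq \bm{x}_{n+1}$, strict quasiconvexity gives $f(\ldots) < \max\{f(\bm{y}), f(\bm{x}_{n+1})\}$, and a short sub-case on which of the two is larger shows this upper bound is itself strictly less than $\max_k f(\bm{x}_k)$ (either $f(\bm{y})$ dominates and IH gives the strict bound, or $f(\bm{x}_{n+1})$ dominates and we already have $f(\ldots) < f(\bm{x}_{n+1}) \leq \max_k f(\bm{x}_k)$).

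The only subtle point, and the step I would flag as the main obstacle, is this final case distinction: a naive application of strict quasiconvexity at the outer step only yields a strict bound over $\{f(\bm{y}), f(\bm{x}_{n+1})\}$, and one must combine it carefully with the inductive strict bound on $f(\bm{y})$ to transfer strictness all the way up to $\max_k f(\bm{x}_k)$. Handling the degenerate sub-cases (all first $n$ points equal; $\bm{y} = \bm{x}_{n+1}$) is what makes the argument go through cleanly.
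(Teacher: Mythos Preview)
Your proposal is correct and follows essentially the same inductive approach as the paper: both reduce the $n$-term combination to a two-term combination $(1-\lambda)\bm{y}+\lambda\bm{x}_{\text{last}}$ and then split on whether $\bm{y}$ equals the last point. The only organisational difference is that the paper first proves the statement assuming all points are \emph{distinct} and afterwards handles repeated points by grouping equal terms (reducing $n$), whereas you carry the weaker hypothesis ``not all equal'' through the induction directly, at the cost of the extra case where $\bm{x}_1=\cdots=\bm{x}_n$. Your sub-case analysis in the final step is slightly more elaborate than necessary---once you have the strict inequality $f(\ldots)<\max\{f(\bm{y}),f(\bm{x}_{n+1})\}$ from the outer application of strict quasiconvexity, it suffices to observe the non-strict bound $\max\{f(\bm{y}),f(\bm{x}_{n+1})\}\le\max_k f(\bm{x}_k)$, which is exactly how the paper closes the argument.
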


\begin{proof}
We prove this by induction on the number of points. Note that it is a straightforward extension of Jensen's inequality for quasiconvex functions \cite{dragomir2012jensen}.

For n = 2, it follows directly from the definition of a strictly quasiconvex function. Let us assume that the inequality holds when $\bm{x}_1, \cdots, \bm{x}_{n-1}$ are distinct and prove the inequality for $\bm{x}_1, \cdots, \bm{x}_{n}$ all distinct. Let ${\lambda_1, \cdots, \lambda_n \in (0, 1)}$, ${\sum_{k=1}^n \lambda_k = 1}$ and

\begin{equation}
    \bm{y} = \sum_{k=1}^{n-1} \frac{\lambda_k}{1-\lambda_n}\bm{x}_k.
\end{equation}
We can discern two possibilities. First, when $\bm{y} = \bm{x}_n$, we have by induction:
\begin{align*}
    f(\lambda_1 \bm{x}_1 + \cdots + \lambda_n \bm{x}_n) & = f\left(\left(1-\lambda_n \right)\bm{y} + \lambda_n\bm{x}_n \right) \\
    & = f\left(\left(1-\lambda_n \right)\bm{x}_n + \lambda_n\bm{x}_n \right) \\
    & = f(\bm{x}_n) \\
    & = f(\bm{y}) \\
    & = f\left(\sum_{k=1}^{n-1} \frac{\lambda_k}{1-\lambda_n}\bm{x}_k \right) \\
    & < \max_{k \in 1, \cdots, n-1} f(\bm{x}_k) \\
    & = \max_{k} f(\bm{x}_k)
\end{align*}
Alternatively, when $\bm{y} \neq \bm{x}_n$. Then, by the definition of strict quasiconvexity and induction:
\begin{align*}
    f(\lambda_1 \bm{x}_1 + \cdots + \lambda_n \bm{x}_n) & = f\left(\left(1-\lambda_n \right)\bm{y} + \lambda_n\bm{x}_n \right) \\
    & < \max \{f(\bm{y}), f(\bm{x}_n) \} \\
    & = \max \left\{f\left(\sum_{k=1}^{n-1} \frac{\lambda_k}{1-\lambda_n}\bm{x}_k \right), f(\bm{x}_n) \right\} \\
    & \leq \max \left\{\max_{k \in 1, \cdots, n-1} f(\bm{x}_k), f(\bm{x}_n) \right\} \\
    & = \max_{k} f(\bm{x}_k)
\end{align*}

Observe that we have shown Lemma \ref{lemma:jensen-strict-qqconv} holds when all points are distinct. When some $\bm{x}_k$'s are equal, we can group the expression $\sum_{k=1}^{n} \lambda_k \bm{x}_k$ into $\sum_{k=1}^{m} \mu_k \bm{y}_k$ where all $\bm{y}_k$'s are distinct and $m < n$. In this case, the result holds due to induction.
\end{proof}

\begin{lemma}[Mixed strategies as a best response]
\label{lemma:equal-payoff}
Consider a (finite, $n$-player) multi-objective normal-form game where players are optimising for the scalarised expected returns criterion. If each player $i$ has a strictly quasiconvex utility function $u_i$, a mixed strategy $s_i$ can only be a best response when the actions that are played with non-zero probability have equal expected returns to the strategy profile of all other players $s_{-i}$.
\end{lemma}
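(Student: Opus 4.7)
The plan is to combine Lemmas \ref{lemma:optimal-ps-ms} and \ref{lemma:jensen-strict-qqconv} in a sandwich argument. Suppose $s_i$ is a best response of player $i$ to $s_{-i}$, write $\alpha_k$ for the probability $s_i$ assigns to action $a_k \in A_i$, and let $K = \{k : \alpha_k > 0\}$ denote its support. Set $\bm{x}_k = \mathbb{E}[\bm{p}_i(a_k, s_{-i})]$; by linearity of expectation, $\mathbb{E}[\bm{p}_i(s_i, s_{-i})] = \sum_{k \in K} \alpha_k \bm{x}_k$, a convex combination of the per-action expected payoff vectors with strictly positive weights.

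First I would derive the two-sided inequality
\begin{equation*}
\max_{k \in K} u_i(\bm{x}_k) \;\leq\; u_i\!\left(\sum_{k \in K} \alpha_k \bm{x}_k\right) \;\leq\; \max_{k \in K} u_i(\bm{x}_k).
\end{equation*}
The right-hand inequality is precisely the quasiconvexity bound invoked in the proof of Lemma \ref{lemma:optimal-ps-ms}. The left-hand inequality follows from the best-response assumption applied to pure deviations: for every $k \in K$, switching to the pure action $a_k$ yields utility $u_i(\bm{x}_k)$, which cannot exceed $u_i(\mathbb{E}[\bm{p}_i(s_i, s_{-i})])$ since $s_i$ is optimal. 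Consequently all three expressions are equal.

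Next, assume toward a contradiction that the vectors $\{\bm{x}_k\}_{k \in K}$ are not all equal. Then $|K| \geq 2$, and each $\alpha_k$ with $k \in K$ lies in the open interval $(0,1)$: the $\alpha_k$'s are strictly positive by definition of $K$ and strictly less than $1$ because there are at least two of them summing to $1$. This places us precisely in the hypotheses of Lemma \ref{lemma:jensen-strict-qqconv}, which gives the strict inequality $u_i(\sum_{k \in K} \alpha_k \bm{x}_k) < \max_{k \in K} u_i(\bm{x}_k)$, contradicting the equality above. Therefore the vectors $\bm{x}_k$ for $k \in K$ must all coincide, which is the statement of the lemma.

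The only point requiring care is the restriction to the support $K$: the strict Jensen lemma requires weights strictly inside $(0,1)$, so actions of probability $0$ must be excluded, and the case $|K| = 1$ is handled vacuously since the conclusion quantifies only over actions played with non-zero probability. Once the scope is set correctly, the argument reduces to a direct sandwich of the two preceding lemmas, and I do not anticipate further obstacles.
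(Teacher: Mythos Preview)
Your proposal is correct and follows essentially the same route as the paper: both arguments combine Lemma~\ref{lemma:optimal-ps-ms} with the strict Jensen inequality of Lemma~\ref{lemma:jensen-strict-qqconv} to conclude that a genuine mixture over actions with non-identical expected payoff vectors is strictly dominated by some pure action in its support. Your version is more explicit---you spell out the sandwich $\max_{k\in K} u_i(\bm{x}_k) \le u_i(\sum_k \alpha_k \bm{x}_k) \le \max_{k\in K} u_i(\bm{x}_k)$ and handle the support restriction and the $|K|=1$ edge case carefully---whereas the paper's proof is a brief informal sketch of the same idea.
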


\begin{proof}
From Lemma \ref{lemma:optimal-ps-ms}, we know that a pure strategy will always be a best response when employing a quasiconvex utility functions. Lemma \ref{lemma:jensen-strict-qqconv} shows that for strictly quasiconvex functions, a combination of distinct points results in a lower return than the maximum individual point. This implies that for strictly quasiconvex utility functions, a mixed strategy will always return a lower utility than the optimal pure strategy when not all expected payoffs are equal for actions that are played with probability greater than zero. Therefore, the only case where a mixed strategy may be optimal is when it combines actions that have equal expected payoffs.
\end{proof}

We can now contribute Theorem \ref{th:no-ne-quasiconvex} which states that Nash equilibria may fail to exist when restricting players to use only strictly convex utility functions. Note that this proves immediately that no Nash equilibrium can be guaranteed when using only convex or quasiconvex utility functions, as these, by definition, contain the set of strictly convex functions.

\begin{theorem}
\label{th:no-ne-quasiconvex}
Consider a (finite, $n$-player) multi-objective normal-form game where players are optimising for the scalarised expected returns criterion. If each player has a strictly convex utility function, a mixed strategy Nash equilibrium is not guaranteed to exist.
\end{theorem}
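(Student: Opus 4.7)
The plan is to establish the theorem by exhibiting a concrete counterexample: a finite two-player MONFG in which both players employ strictly convex utility functions yet admits no Nash equilibrium under SER. A natural candidate is a $2 \times 2$ game with two objectives whose pure-profile best-response structure is cyclic (in the spirit of matching pennies), combined with utility functions such as $u_i(p_1, p_2) = p_1^2 + p_2^2$, which are strictly convex on $\mathbb{R}^2$.

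The first step is to write down explicit vectorial payoffs and verify, by evaluating $u_i$ at each of the four pure action profiles, that every pure strategy profile admits at least one player with a strictly profitable unilateral deviation. This rules out pure strategy Nash equilibria directly from Definition \ref{def:MOMA-NE-SER}.

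The second step is to rule out properly mixed Nash equilibria. Since strict convexity implies strict quasiconvexity, Lemma \ref{lemma:optimal-ps-ms} ensures that for any opponent profile $s_{-i}$ there is always a pure best response, and Lemma \ref{lemma:equal-payoff} sharpens this by requiring that if $s_i$ is a properly mixed best response to $s_{-i}$ then every action in the support of $s_i$ must yield the same expected payoff vector against $s_{-i}$. I would then show that in the constructed game these vectorial indifference conditions cannot be met simultaneously for both players while both strategies lie in the interior of their simplices: any candidate mixed profile either violates vectorial equality, in which case by Lemma \ref{lemma:optimal-ps-ms} one player strictly prefers a pure deviation, or collapses onto one of the pure profiles already excluded in the first step.

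The main obstacle will be the case analysis for mixed profiles. The vectorial indifference demanded by Lemma \ref{lemma:equal-payoff} is substantially more stringent than the scalar indifference used in the classical Nash existence argument: the opponent would have to equalise each component of the expected payoff vector simultaneously, which is generically infeasible with a single mixing parameter when $d \geq 2$. I would therefore design the payoff vectors so that the two actions of each player produce expected payoff vectors differing in a direction that no mixing of the opponent can annihilate, forcing every best response to be pure. Combined with the cyclic pure-profile structure from the first step, this yields the desired non-existence result and, since strict convexity is a stronger property than convexity or quasiconvexity, the same counterexample simultaneously refutes existence under the latter two conditions.
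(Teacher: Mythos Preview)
Your proposal is correct and mirrors the paper's proof almost exactly: the paper also exhibits a $2\times 2$ two-objective game with the same strictly convex utility $u_i(p_1,p_2)=p_1^2+p_2^2$, verifies a cyclic (matching-pennies-like) pure best-response structure to exclude PSNE, and then invokes Lemma~\ref{lemma:equal-payoff} (via Lemma~\ref{lemma:optimal-ps-ms}) to argue that no proper mixed best response exists because the required vectorial indifference between actions is infeasible. The only thing you leave implicit that the paper makes explicit is a specific payoff table realising these properties.
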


\begin{proof}
Consider the game in Figure \ref{fig:counterexample-qconv} and utility functions

\begin{equation}
    u_1(p_1, p_2) = u_2(p_1, p_2) = p_1^2 + p_2^2.
\end{equation}

\begin{figure}[h]
    \centering
    \begin{game}{2}{2}
                  & $A$        & $B$\\
        $A$       & $(2, 0); (1, 0)$       & $(1, 0); (0, 2)$\\
        $B$       & $(0, 1); (2, 0)$       & $(0, 2); (0, 1)$
    \end{game}
    \caption{A game for which one can select strictly convex utility functions that do not admit a Nash equilibrium.}
    \label{fig:counterexample-qconv}
\end{figure}

This utility function is strictly convex in $\mathbb{R}^2$, and therefore also (continuous) strictly quasiconvex in $\mathbb{R}^2$. From Lemma \ref{lemma:equal-payoff}, we know that with such utility functions a mixed strategy can only be a best response when all actions that are played with non-zero probability return the same payoff vector. Because there are only two actions, any mixed strategy would have equal expected payoff for both actions in response to the opponent's strategy. Observe that this is never possible for either player due to the imbalanced return vector. As such, we may restrict the set of possible Nash equilibria to pure strategy Nash equilibria. It is trivial to verify that no pure-strategy Nash equilibrium exists and therefore no Nash equilibrium exists.
\end{proof}

In (single-objective) NFGs, there exist efficient methods for retrieving Nash equilibria \cite{porter2008simple}. By providing sufficient guarantees for NE existence in MONFGs, and importantly what restrictions do \emph{not} suffice, we open up the possibility for computational methods in this setting as well. Furthermore, algorithms in multi-agent reinforcement learning often aim for agents to converge to a Nash equilibrium \cite{busoniu2008comprehensive,nowe2012game}. Theorem \ref{th:ne-quasiconcave} guarantees the existence of an NE given certain restrictions, thus opening the possibility for new learning algorithms to be developed with the same goal in mind. We discuss these ideas as possible directions for future work in Section \ref{sec:conclusion}.

\section{Equilibrium Relations Between Optimisation Criteria}
\label{sec:existence-ne}
In this section, we shift our focus from the existence of Nash equilibria to the relations between Nash equilibria under both optimisation criteria. Specifically, we explore whether the number of Nash equilibria in an MONFG under SER equals that of the trade-off game and if any Nash equilibria need necessarily be shared. We find in both cases that even when a Nash equilibrium exists, the results are negative and that there is no guaranteed relation between the two in general.

We first contribute a theorem stating the fact that in an MONFG, the number of NE under SER can be different from the number of NE under ESR, even when both have NE.

\begin{theorem}
\label{th:ineq-monfg}
Consider a (finite, $n$-player) multi-objective normal-form game with at least one Nash equilibrium under both optimisation criteria. The size of the sets of Nash equilibria under the scalarised expected returns criterion and under the expected scalarised returns criterion need not be equal.
\end{theorem}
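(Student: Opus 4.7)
The plan is to establish Theorem \ref{th:ineq-monfg} by explicit counterexample, since the statement only asserts that equality of cardinalities can fail. I would exhibit a single MONFG together with a choice of continuous quasiconcave utility functions such that (a) both optimisation criteria admit at least one Nash equilibrium, and (b) the cardinalities of the two equilibrium sets differ. The most economical setting is a two-player, two-action, two-objective game with a nonlinear utility function, so that Theorem \ref{th:ne-quasiconcave} guarantees SER existence and Nash's classical theorem applied to the scalar trade-off game guarantees ESR existence.

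First I would fix the payoff matrix so that the scalar trade-off game $u_i \circ \bm{p}_i$ becomes a $2\times 2$ NFG with exactly three Nash equilibria: two pure strategy equilibria and one mixed strategy equilibrium determined by the usual indifference condition. By the reduction of ESR to the trade-off game recalled in Section \ref{sec:motivation}, this count equals the cardinality of the ESR equilibrium set. A natural construction is a symmetric MONFG in which the off-diagonal payoff vectors scalarise to a common positive value while the diagonal payoff vectors are Pareto-incomparable and scalarise to zero; the trade-off game then behaves like a miscoordination-style game with two pure Nash equilibria off the diagonal and a uniform mixed Nash equilibrium.

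Next I would analyse Nash equilibria under SER. Because the expected payoff vector is a bilinear function of the two mixing probabilities $p$ and $q$, each player's SER objective becomes a closed-form function of $(p,q)$. The key trick is to select the payoff vectors and the utility so that the resulting SER objective of each player depends only on the aggregator $p + q$ rather than on $p$ and $q$ separately. Each player's best response is then a one-parameter family rather than a single mixing probability, and the mutual-best-response condition collapses onto a continuum of mixed strategy profiles. This produces infinitely many SER Nash equilibria, which strictly exceeds the finite count of ESR equilibria and proves the theorem.

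The main obstacle is verifying exhaustiveness: I must confirm that every candidate strategy profile in the proposed continuum is indeed a mutual best response under SER, and also rule out additional isolated SER equilibria lying off the continuum. This reduces to a univariate maximisation per player followed by a careful case analysis on the sign of the derivative of the SER objective, exploiting the chosen monotonicity and quasiconcavity of the utility. Once this is settled, comparing a continuum of SER equilibria with a finite set of three ESR equilibria yields the strict cardinality gap required by the theorem.
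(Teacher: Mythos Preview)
Your plan is sound and would produce a valid counterexample, but it differs from the paper's construction. The paper uses the utility $u(p_1,p_2)=0.1\,p_1+\max\{0,p_1\}\max\{0,p_2\}$ (neither quasiconvex nor quasiconcave) on a specific $2\times 2$ MONFG; there the trade-off game has a \emph{unique} Nash equilibrium $(A,A)$, while under SER $(A,A)$ fails and at least two mixed equilibria appear, giving cardinalities $1$ versus $\geq 2$. Because the sole ESR equilibrium is not an SER equilibrium, that same example is reused verbatim to prove Theorem~\ref{th:sharing-ne}. Your approach instead uses a quasiconcave utility (so Theorem~\ref{th:ne-quasiconcave} supplies SER existence for free) and engineers the SER objective to depend only on $p+q$, producing a continuum of SER equilibria against three ESR equilibria---a sharper cardinality gap and a cleaner existence story. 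The price is that in the symmetric miscoordination trade-off game you describe, all three ESR equilibria satisfy $p+q=1$, which is exactly where the SER continuum sits in the natural realisation (e.g.\ payoff vectors $(2,0),(1,1),(1,1),(0,2)$ with $u(x,y)=xy$); the ESR set then lies inside the SER set, so your construction cannot be recycled for Theorem~\ref{th:sharing-ne}. As a minor simplification, for Theorem~\ref{th:ineq-monfg} you need not rule out off-continuum SER equilibria: exhibiting the continuum already forces the SER set to be strictly larger than any finite ESR set.
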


\begin{proof}
Consider the utility functions 
\begin{equation}
\label{eq:theorem-utility}
u_1(p_1, p_2) = u_2(p_1, p_2) = 0.1 \cdot p_1 + \max\{0, p_1\}  \cdot \max\{0, p_2\}
\end{equation}
and the MONFG under SER shown in Figure \ref{fig:counter-ser} with its trade-off game under ESR in Figure \ref{fig:counter-esr}.

\begin{figure}[h]
    \begin{subfigure}[t]{0.45\linewidth}
            \centering
            \begin{game}{2}{2}
                  & $A$        & $B$\\
                $A$       & $(1, 0); (1, 0)$       & $(0, 1); (0, 1)$\\
                $B$       & $(0, 1); (0, 1)$       & $(-10, 0); (-10, 0)$
            \end{game}
            \caption{The multi-objective reward vectors.}
            \label{fig:counter-ser}
    \end{subfigure}%
    \qquad
    \begin{subfigure}[t]{0.45\linewidth}
            \centering
            \begin{game}{2}{2}
                  & $A$        & $B$\\
                $A$       & \nashhighlight{$0.1; 0.1$}       & $0; 0$\\
                $B$       & $0; 0$       & $-1; -1$
            \end{game}
            \caption{The ESR utilities.}
            \label{fig:counter-esr}
    \end{subfigure}%
    \caption{An MONFG and its trade-off game from the utility functions shown in Equation \ref{eq:theorem-utility}. The MONFG shows by construction the two properties in Theorem \ref{th:ineq-monfg} and \ref{th:sharing-ne}. The highlighted cell is a pure strategy Nash equilibrium.}
    \label{fig:counter-ser-esr}
\end{figure}

To begin, let us show the NE in the MONFG under ESR. We do this by first applying the utility function for each agent -- which in this case happens to be the same -- directly to the payoff vectors in the MONFG. The resulting trade-off game can be seen in Figure \ref{fig:counter-esr}. We then observe that only the pure strategy profile ${(A, A)}$ results in utilities above $0$ for both agents. As such, there is no incentive for agents to play a mixed strategy when the other agent plays $A$ at least part of the time, leading to the pure strategy NE of ${(A, A)}$. Additionally, ${(B, B)}$ is not a NE, as there is an incentive for either agent to deviate to $A$, when the other plays $B$. This then again leads both agents to adapt their strategies to the NE of ${(A, A)}$, making it the only NE of the MONFG under ESR.

Next, we discuss the NE for the MONFG under SER (\ref{fig:counter-ser}). First note that the pure strategy NE of ${(A, A)}$ under ESR is not a NE under SER. To see this, observe that when one agent plays $A$ deterministically, the best response is found by maximising the univariate function 
\begin{equation}
\begin{split}
    \left(1 \cdot x + 0 \cdot \left(1 - x\right)\right) \cdot 0.1 \\ 
    + & \max\left\{0, \left(1 \cdot x + 0 \cdot \left(1 - x\right)\right)\right\} \\ 
    \cdot & \max\left\{0, \left(0 \cdot x + 1 \cdot \left(1 - x\right)\right)\right\} \quad \text{ with } 0 \leq x \leq 1
\end{split}
\end{equation}
with $x$ representing the probability of playing action $A$. Because strategies should sum to $1$, the probability of playing $B$ is equal to $(1-x)$. The solution to this maximisation is a mixed strategy with probability $\frac{11}{20}$ for action $A$ and probability $\frac{9}{20}$ for action $B$. This results in an expected return of $(\frac{11}{20}, \frac{9}{20})$ and a utility of ${0.1\cdot\frac{11}{20} + \frac{11}{20}\cdot\frac{9}{20}=0.3025}$ for both agents. In fact, this constitutes a NE under SER for this game, as no agent has an incentive to deviate from this strategy. A second NE occurs when the agents switch strategies, resulting in the same payoffs.

We can also show that the pure strategy ${(B, B)}$ is not a NE, as this can be improved upon by either agent deterministically playing A. As such, the MONFG in Figure \ref{fig:counter-ser-esr} has at least two mixed strategy NE under SER and no pure strategy NE. 

In this MONFG, both the game under SER and ESR have NE. However, we can see that they have a different number of NE, proving Theorem \ref{th:ineq-monfg}. 
\end{proof}

Given that the size of the sets of NE need not be equal, it is interesting to see whether any NE themselves need to be shared. Here too we are able to show that no such relation exists in general. We formalise this in Theorem \ref{th:sharing-ne}.

\begin{theorem}
\label{th:sharing-ne}
Consider a (finite, $n$-player) multi-objective normal-form game with at least one Nash equilibrium under both optimisation criteria. The set of Nash equilibria under the scalarised expected returns criterion and the set of Nash equilibria under the expected scalarised returns criterion may be disjoint.
\end{theorem}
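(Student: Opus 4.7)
The plan is to reuse the same example constructed in the proof of Theorem \ref{th:ineq-monfg} -- the MONFG in Figure \ref{fig:counter-ser-esr} under the utility functions of Equation \ref{eq:theorem-utility} -- and to verify that, in that one game, the SER and ESR equilibrium sets have empty intersection. The previous theorem already identified both sets explicitly: the unique ESR NE is the pure profile $(A,A)$, while every SER NE is a mixed profile in which one player plays $A$ deterministically and the other plays $A$ with probability $11/20$ and $B$ with probability $9/20$. Since no SER NE is pure and the ESR NE is pure, it suffices to check that neither set is contained in the other.

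First, I would observe that $(A,A)$ cannot be a SER NE. If the opponent plays pure $A$, a player's SER best response is obtained by maximising the univariate function $1.1x - x^{2}$ on $[0,1]$, whose unique maximiser is $x = 11/20$. Hence each player has a strictly profitable unilateral SER deviation from $(A,A)$, which rules out the unique ESR NE from being a SER NE.

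Second, I would argue that each SER NE fails to be an ESR NE. Consider the SER profile in which player~$1$ mixes $(11/20,\, 9/20)$ over $(A,B)$ while player~$2$ plays pure $A$. Under ESR, player~$1$'s expected utility is the expectation of $u$ applied to each pure payoff vector, namely $(11/20)\cdot u(1,0) + (9/20)\cdot u(0,1) = 11/200 = 0.055$. Deviating to pure $A$ yields $u(1,0) = 0.1 > 0.055$, a strict improvement, so this profile is not an ESR NE; the symmetric SER NE is disposed of in the same way. Combining the two steps, no SER NE is an ESR NE and no ESR NE is a SER NE, so the two sets are disjoint.

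The main work here is conceptual rather than technical: one needs a single game in which a mixed profile is a SER NE (which requires the nonlinear product term in $u$ to reward mixing \emph{after} averaging the payoff vector) while simultaneously admitting a profitable ESR deviation (which exploits the fact that ESR never evaluates $u$ on the averaged vector, so the product term contributes nothing whenever any pure outcome has a zero coordinate). The example of Theorem \ref{th:ineq-monfg} was crafted with precisely this property in mind, so the proof reduces to the two best-response checks sketched above.
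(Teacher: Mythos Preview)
Your proposal is correct and follows essentially the same approach as the paper: both reuse the example from Theorem~\ref{th:ineq-monfg} and argue that the unique ESR equilibrium $(A,A)$ admits a profitable SER deviation to the mixed strategy $(\tfrac{11}{20},\tfrac{9}{20})$, which already forces disjointness since the ESR set is a singleton. Your second step (showing each SER equilibrium admits a profitable ESR deviation) is a harmless extra check but is not needed for disjointness, and your claim that Theorem~\ref{th:ineq-monfg} identified \emph{every} SER equilibrium slightly overstates what was actually proven there---though, again, this plays no role once step~1 is in hand.
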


\begin{proof}
Consider again the game in Figure \ref{fig:counter-ser-esr} and utility functions in Equation \ref{eq:theorem-utility}. We already highlighted the only Nash equilibrium for the MONFG under ESR, namely $(A, A)$. Moreover, we observe that this joint strategy is not a Nash equilibrium under SER as there is an incentive for either agent to deviate to playing the mixed strategy $(\frac{11}{20}, \frac{9}{20})$. As the MONFG has no other NE under ESR, no Nash equilibrium is shared in this construction.
\end{proof}

\section{Pure Strategy Nash Equilibria in MONFGs}
\label{sec:psne}
In the previous section, Theorem \ref{th:sharing-ne} stated that Nash equilibria need not be shared between optimisation criteria. A natural follow up question is under what circumstances this does occur. We expand on this issue and first show that a Pure Strategy Nash Equilibrium (PSNE) under SER must always be a PSNE under ESR as well. Furthermore, we demonstrate that the inverse does not hold by providing a counter example. However, we prove that adding the assumption that all utility functions in the MONFG are quasiconvex does ensure that PSNE under ESR are also a PSNE under SER. Finally, as a direct result of the new theorems, we show that the set of PSNE under ESR and SER are the same when assuming only quasiconvex utility functions. 

In order to show that a PSNE under SER must necessarily be a PSNE under ESR, we introduce Lemma \ref{lemma:utility}. This lemma states that the utility of a pure strategy profile under SER is the same as the utility of that pure strategy profile under ESR.

\begin{lemma}[Utility of a pure strategy]
\label{lemma:utility}
Given a pure strategy profile in a (finite, $n$-player) multi-objective normal-form game, the scalarised expected returns will always equal the expected scalarised returns. 
\end{lemma}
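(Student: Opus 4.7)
The plan is to exploit the fact that for a pure strategy profile, all probability mass is concentrated on a single action profile, so the expectation collapses to a single deterministic payoff vector. Once this collapse is made explicit on both sides, ESR and SER reduce to the same expression $u_i(\bm{p}_i(a^\ast))$, and the equality follows immediately — no assumption on $u_i$ is required.

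Concretely, I would start by fixing an arbitrary pure strategy profile $s = (s_1, \dots, s_n)$ where each $s_i$ puts probability $1$ on some action $a_i^\ast \in A_i$ and probability $0$ on all others. Write $a^\ast = (a_1^\ast, \dots, a_n^\ast)$. Then, for any action profile $a = (a_1, \dots, a_n) \in \mathcal{A}$, the product $\prod_{j=1}^n s_j(a_j)$ equals $1$ if $a = a^\ast$ and $0$ otherwise. This single observation is the entire engine of the proof.

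Next I would substitute this into the definition of SER from Equation \ref{eq:SER}: the sum $\sum_{a \in \mathcal{A}} \bm{p}_i(a) \prod_{j=1}^n s_j(a_j)$ collapses to $\bm{p}_i(a^\ast)$, so $u_i(\mathbb{E}[\bm{p}_i(s)]) = u_i(\bm{p}_i(a^\ast))$. Doing the same substitution into the definition of ESR from Equation \ref{eq:ESR}, the sum $\sum_{a \in \mathcal{A}} u_i(\bm{p}_i(a)) \prod_{j=1}^n s_j(a_j)$ collapses to $u_i(\bm{p}_i(a^\ast))$. Hence both quantities equal $u_i(\bm{p}_i(a^\ast))$ and are therefore equal to each other for every player $i$.

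There is really no main obstacle here; the statement is essentially a sanity-check bridging the two criteria at pure strategies. The only thing to be careful about is the bookkeeping of indices when writing out the product of indicator-like terms, and making it clear that the argument holds for every player $i \in N$ without invoking any property of $u_i$ (continuity, quasiconvexity, monotonicity, etc.). This generality is what makes the lemma useful downstream for proving that PSNE under SER are automatically PSNE under ESR.
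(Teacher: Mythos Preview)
Your proposal is correct and follows essentially the same approach as the paper: both arguments use that a pure strategy profile places all probability mass on a single action profile, so the expectation is degenerate and both SER and ESR collapse to $u_i(\bm{p}_i(a^\ast))$. The paper phrases this slightly more abstractly as ``the expectation of a constant equals that constant,'' whereas you write out the sums from Equations~\ref{eq:ESR} and~\ref{eq:SER} explicitly, but the content is identical.
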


\begin{proof}
Consider a pure strategy profile $s$. We know that the observed payoff vector $\bm{p}(s)$ for this strategy will always be the same, even for multiple executions of the same strategy, as there is no randomisation over the actions. Because the expectation of a constant is equal to that constant we may state,
\begin{equation}
    \mathbb{E}[\bm{p}(s)] = \bm{p}(s)
\end{equation}

and given a utility function $u$, the expected utility also equals the received utility by the same reasoning,
\begin{equation}
\mathbb{E}[u(\bm{p}(s))] = u(\bm{p}(s)).
\end{equation}
We can thus say that for a pure strategy profile, the utility of a payoff under SER equals the utility under ESR:
\begin{equation}
u\left(\mathbb{E}\left[\bm{p}\left(s\right)\right]\right) = u\left(\bm{p}\left(s\right)\right) = \mathbb{E}\left[u\left(\bm{p}\left(s\right)\right)\right].
\end{equation}
\end{proof}

We note the importance of deterministic payoffs in MONFGs for Lemma \ref{lemma:utility}. If stochastic payoffs were allowed, the lemma, and by extension all other results in this section, would not hold. Given this lemma, we now define the first theorem of this section which states that a PSNE under SER, must always be a PSNE under ESR as well. In other words, the pure strategy Nash equilibria found in an MONFG under SER, must also be Nash equilibria in the trade-off game.

\begin{theorem}
\label{th:pure-strat-ser-esr}
Consider a (finite, $n$-player) multi-objective normal-form game with a pure strategy Nash equilibrium under the scalarised expected returns criterion. This joint pure strategy must necessarily also be a Nash equilibrium under the expected scalarised returns criterion.
\end{theorem}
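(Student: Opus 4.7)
The plan is to leverage Lemma \ref{lemma:utility} (which states SER and ESR coincide on pure strategy profiles) together with the observation that when the opponents play pure, the ESR objective is \emph{linear} in the deviating player's mixed strategy. Because the PSNE hypothesis gives us an inequality for all mixed deviations under SER, the crux is only to convert it into an ESR statement, which is easy when the equilibrium profile is pure.

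First I would fix a PSNE $s^{NE} = (s_1^{NE}, \dots, s_n^{NE})$ under SER, a player $i$, and show the ESR inequality
\[
\mathbb{E}\bigl[u_i(\bm{p}_i(s_i^{NE}, s_{-i}^{NE}))\bigr] \geq \mathbb{E}\bigl[u_i(\bm{p}_i(s_i, s_{-i}^{NE}))\bigr]
\]
for every alternative $s_i \in S_i$. The left-hand side is evaluated at a pure profile, so by Lemma \ref{lemma:utility} it equals $u_i(\mathbb{E}[\bm{p}_i(s^{NE})])$, which is exactly the SER payoff at equilibrium. It therefore suffices to bound the right-hand side by the SER payoff at equilibrium.

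Next I would reduce the claim to pure deviations. Since $s_{-i}^{NE}$ is pure, the mapping $s_i \mapsto \mathbb{E}[u_i(\bm{p}_i(s_i, s_{-i}^{NE}))]$ is linear: expanding the expectation gives
\[
\mathbb{E}\bigl[u_i(\bm{p}_i(s_i, s_{-i}^{NE}))\bigr] = \sum_{a_i \in A_i} s_i(a_i)\, u_i\bigl(\bm{p}_i(a_i, s_{-i}^{NE})\bigr),
\]
so any mixed deviation is a convex combination of pure deviations. Hence it is enough to verify the ESR inequality when $s_i = a_i$ is pure. But for a pure deviation, $(a_i, s_{-i}^{NE})$ is itself a pure profile, so Lemma \ref{lemma:utility} turns the SER inequality at $s^{NE}$ against the deviation $a_i$ into
\[
u_i\bigl(\bm{p}_i(s^{NE})\bigr) \geq u_i\bigl(\bm{p}_i(a_i, s_{-i}^{NE})\bigr) = \mathbb{E}\bigl[u_i(\bm{p}_i(a_i, s_{-i}^{NE}))\bigr],
\]
and the left-hand side equals $\mathbb{E}[u_i(\bm{p}_i(s^{NE}))]$ by the same lemma. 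Taking a convex combination over $a_i$ with weights $s_i(a_i)$ yields the desired ESR inequality for the arbitrary mixed $s_i$, completing the argument.

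I do not anticipate a serious obstacle: the only subtlety is recognising that the SER best-response inequality, when applied to a \emph{pure} alternative $a_i$ against a pure $s_{-i}^{NE}$, collapses to the ESR inequality because both sides of the comparison involve pure profiles, so Lemma \ref{lemma:utility} can be invoked on both sides. The linearity of the ESR deviation value in the deviator's own mixture then lifts the pure-deviation inequality to all mixed deviations for free.
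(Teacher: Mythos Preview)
Your proposal is correct and follows essentially the same approach as the paper: both arguments use Lemma~\ref{lemma:utility} to collapse SER and ESR at pure profiles, specialise the SER best-response inequality to pure deviations $a_i$, and then lift back to arbitrary mixed deviations via the linearity of the ESR objective (equivalently, that a convex combination of scalars is bounded by their maximum). Your write-up is in fact a bit more explicit than the paper's about where the linearity is used, but the logical skeleton is identical.
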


\begin{proof}
Given a pure strategy Nash equilibrium under SER $s^{NE}$, we can say that:
\begin{align*}
& u_i\left(\mathbb{E}\left[ \bm{p}_i \left(s^{NE}_i, s^{NE}_{-i} \right)\right]\right) \geq u_i\left(\mathbb{E}\left[ \bm{p}_i \left(s_i, s^{NE}_{-i} \right)\right]\right) \\ 
\iff & u_i\left(\bm{p}_i \left(s^{NE}_i, s^{NE}_{-i} \right)\right) \geq u_i\left(\mathbb{E}\left[ \bm{p}_i \left(s_i, s^{NE}_{-i} \right)\right]\right) \\
\implies & \forall a \in A_i: u_i\left(\bm{p}_i \left(s^{NE}_i, s^{NE}_{-i} \right)\right) \geq u_i\left(\bm{p}_i\left(a, s_{-i}^{NE}\right)\right) \\
\iff & u_i\left(\bm{p}_i \left(s^{NE}_i, s^{NE}_{-i} \right)\right) \geq \max_{\bm{\alpha}} \sum_{a \in A_i} \alpha_a u_i\left(\bm{p}_i\left(a, s^{NE}_{-i}\right)\right) \\ 
& \text{ where } \forall a \in A_i: \alpha_a \geq 0,\sum_{a \in A_i} \alpha_a=1\\
\iff & \mathbb{E}\left[u_i\left(\bm{p}_i \left(s^{NE}_i, s^{NE}_{-i} \right)\right)\right] \geq \mathbb{E}\left[ u_i\left(\bm{p}_i \left(s_i, s^{NE}_{-i} \right)\right)\right] \\
\iff & \text{A pure strategy Nash equilibrium under ESR}
\end{align*} 
\end{proof}

The proof starts with the general definition of a pure strategy Nash equilibrium under SER and removes the expected values where possible. We then state that if the pure strategy profile is an NE, it must necessarily also be equal to or better than unilaterally playing another pure strategy. This leads us to state that the utility of the pure strategy NE is greater or equal to the optimal convex combination of the utilities of the other pure strategies. This is because such a combination of scalars is bounded by the maximum of these scalars. Next, we can freely introduce the expected value again in the left hand side of the inequality and rewrite the right hand side such that it now reflects the expected scalarised returns. This final inequality is also the definition of a Nash equilibrium under ESR. Given this positive result, it is alluring to believe that the inverse, so going from ESR to SER, would also hold. However, this is not the case as we can only guarantee that the utility of a pure strategy profile is greater or equal to the optimal stochastic mixture of scalar utilities. We can not guarantee that it is better than the utility of the optimal stochastic mixture of reward vectors.

\begin{theorem}
\label{th:pure-strat-esr-ser}
Consider a (finite, $n$-player) multi-objective normal-form game with a pure strategy Nash equilibrium under the expected scalarised returns criterion. This joint pure strategy need not be a Nash equilibrium under the scalarised expected returns criterion.
\end{theorem}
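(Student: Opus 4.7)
The plan is straightforward, since a counterexample suffices, and the MONFG already built for Theorems \ref{th:ineq-monfg} and \ref{th:sharing-ne} (Figure \ref{fig:counter-ser-esr}) under the utility functions of Equation \ref{eq:theorem-utility} exhibits exactly the required behaviour. So I would reuse that very construction rather than engineering a new one.

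Concretely, I would first recall from the analysis of Theorem \ref{th:ineq-monfg} that the pure profile $(A,A)$ is a Nash equilibrium under ESR: in the trade-off game of Figure \ref{fig:counter-esr} any unilateral deviation from $(A,A)$ drops the deviating player's scalar utility from $0.1$ to $0$, so there is no profitable pure deviation. Since under ESR a mixed deviation yields a utility that is a convex combination of the scalar utilities of pure deviations, and such a combination is bounded by the maximum, there is no profitable mixed deviation either; thus $(A,A)$ is a PSNE under ESR. Next, I would invoke the second half of the proof of Theorem \ref{th:ineq-monfg} to observe that $(A,A)$ is \emph{not} a Nash equilibrium under SER: if one player fixes $A$, the other's SER best response is the strictly mixed strategy $(\tfrac{11}{20}, \tfrac{9}{20})$, which produces expected payoff vector $(\tfrac{11}{20}, \tfrac{9}{20})$ and utility $0.1\cdot\tfrac{11}{20} + \tfrac{11}{20}\cdot\tfrac{9}{20} = 0.3025 > 0.1$. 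Hence a profitable mixed deviation exists and Definition \ref{def:MOMA-NE-SER} fails.

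I would close with a brief conceptual remark on why the converse of Theorem \ref{th:pure-strat-ser-esr} is doomed to fail in principle, not just in this witness. Under ESR, mixed deviations are evaluated as convex combinations of the scalar utilities of pure deviations and are therefore bounded by the best pure deviation, so PSNE under ESR is automatically mixed-robust on the ESR side. Under SER, a mixed deviation is instead the utility of a convex combination of payoff \emph{vectors}, and the nonlinearity of $u$ (e.g.\ the product term in Equation \ref{eq:theorem-utility}) lets this quantity strictly exceed the utility of every pure deviation, in the spirit of Jensen's inequality. This is exactly the gap the counterexample exploits, and there is no real obstacle in the argument: the only design step is picking a utility function and payoff matrix where mixing over the opponent's fixed pure strategy pays off, which the existing Figure \ref{fig:counter-ser-esr} already achieves.
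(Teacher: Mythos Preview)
Your proposal is correct and matches the paper's own proof essentially verbatim: the paper also reuses the MONFG of Figure~\ref{fig:counter-ser-esr} with the utility functions of Equation~\ref{eq:theorem-utility}, observes that $(A,A)$ is a PSNE under ESR, and points to the profitable SER deviation $(\tfrac{11}{20},\tfrac{9}{20})$ already computed in the proof of Theorem~\ref{th:ineq-monfg}. Your added conceptual remark about why the converse of Theorem~\ref{th:pure-strat-ser-esr} should fail is a nice bonus the paper does not include at this point.
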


\begin{proof}
We show this theorem by using the same MONFG and utility functions as presented in the proof for Theorem \ref{th:ineq-monfg} and \ref{th:sharing-ne}. Recall that in this game, the pure strategy profile $(A, A)$ was a PSNE under ESR. This joint strategy was no PSNE under SER, as the best response to a player opting for $A$ was to play $A$ with probability $\frac{11}{20}$ and $B$ with probability $\frac{9}{20}$. 
\end{proof}
 
We add that an additional assumption can be made to remedy this negative result. Concretely, by making the assumption that all utility functions used by the players in the game are quasiconvex, we now show that a PSNE under ESR must also be a PSNE under SER. Note that contrary to Theorem \ref{th:ne-quasiconcave}, we do not need to restrict our utility functions to be continuous. 

\begin{theorem}
\label{th:pure-strat-esr-ser-qconvex}
Consider a (finite, $n$-player) multi-objective normal-form game with a pure strategy Nash equilibrium under the expected scalarised returns criterion. If all players have a quasiconvex utility function, this joint pure strategy must necessarily also be a Nash equilibrium under the scalarised expected returns criterion.
\end{theorem}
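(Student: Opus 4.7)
The plan is to combine Lemma \ref{lemma:utility} (SER and ESR utilities agree at pure strategy profiles) with Lemma \ref{lemma:optimal-ps-ms} (under quasiconvex utilities, a pure strategy is always available as a SER best response). The ESR Nash condition only ensures that no \emph{pure} unilateral deviation is profitable, while the SER Nash condition must rule out all \emph{mixed} deviations too; quasiconvexity is precisely what lets us reduce the SER deviation space back to pure strategies.

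Concretely, I would fix a player $i$ and the pure strategy Nash equilibrium $s^{NE}$ under ESR. For any pure deviation $a_i \in A_i$, Definition \ref{def:MOMA-NE-ESR} and Lemma \ref{lemma:utility} give
\begin{equation*}
u_i\!\left(\bm{p}_i(s^{NE}_i, s^{NE}_{-i})\right) \;=\; \mathbb{E}\!\left[u_i(\bm{p}_i(s^{NE}))\right] \;\geq\; \mathbb{E}\!\left[u_i(\bm{p}_i(a_i, s^{NE}_{-i}))\right] \;=\; u_i\!\left(\bm{p}_i(a_i, s^{NE}_{-i})\right),
\end{equation*}
and applying Lemma \ref{lemma:utility} a second time on both sides rewrites this as
\begin{equation*}
u_i\!\left(\mathbb{E}[\bm{p}_i(s^{NE}_i, s^{NE}_{-i})]\right) \;\geq\; u_i\!\left(\mathbb{E}[\bm{p}_i(a_i, s^{NE}_{-i})]\right).
\end{equation*}
So the SER deviation inequality is already established against every pure alternative.

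Next, I would invoke Lemma \ref{lemma:optimal-ps-ms}: since $u_i$ is quasiconvex, player $i$ has a pure best response $a^\ast_i$ to $s^{NE}_{-i}$ under SER, meaning $u_i(\mathbb{E}[\bm{p}_i(a^\ast_i, s^{NE}_{-i})]) \geq u_i(\mathbb{E}[\bm{p}_i(s_i, s^{NE}_{-i})])$ for every mixed $s_i \in S_i$. Chaining this with the inequality from the previous paragraph (taking $a_i = a^\ast_i$) yields, for every mixed $s_i$,
\begin{equation*}
u_i\!\left(\mathbb{E}[\bm{p}_i(s^{NE}_i, s^{NE}_{-i})]\right) \;\geq\; u_i\!\left(\mathbb{E}[\bm{p}_i(a^\ast_i, s^{NE}_{-i})]\right) \;\geq\; u_i\!\left(\mathbb{E}[\bm{p}_i(s_i, s^{NE}_{-i})]\right),
\end{equation*}
which is exactly the SER Nash condition of Definition \ref{def:MOMA-NE-SER}. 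Since $i$ was arbitrary, $s^{NE}$ is a Nash equilibrium under SER.

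I do not anticipate any real obstacle here, because the two heavy lifts have already been done: Lemma \ref{lemma:utility} collapses the ESR/SER distinction on pure profiles, and Lemma \ref{lemma:optimal-ps-ms} collapses the mixed-versus-pure distinction for SER best responses under quasiconvexity. The only subtle point is to be careful that Lemma \ref{lemma:optimal-ps-ms} applies even though we have no continuity assumption: its proof relies on a purely algebraic property of quasiconvex functions (Jensen-type inequality), so continuity is indeed not needed, which matches the theorem's statement that only quasiconvexity, not continuous quasiconvexity, is required.
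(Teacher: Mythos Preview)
Your proof is correct and follows essentially the same approach as the paper: both arguments reduce the SER deviation space to pure strategies via the Jensen-type inequality for quasiconvex functions, then use the equivalence of ESR and SER at pure profiles to conclude. The only cosmetic difference is that you cite Lemma~\ref{lemma:optimal-ps-ms} explicitly, whereas the paper inlines the Jensen step and the observation that $u_i(\bm{p}_i(s^{NE})) = \max_{a \in A_i} u_i(\bm{p}_i(a, s^{NE}_{-i}))$ directly in the chain of implications.
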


We first highlight that the construction created in the proof of Theorem \ref{th:pure-strat-esr-ser} uses a non-quasiconvex utility function. We observe this by applying the definition of quasiconvexity from Definition \ref{def:quasiconvex} to this utility function. If we take for example $\bm{x}_1 = (1, 0)$, $\bm{x}_2 = (0, 1)$ and $t=0.5$, then we get $f(0.5; 0.5) = 0.05 + 0.25 = 0.3$ for the left hand side and $\max\{f(1, 0), f(0, 1)\} = 0.05$ for the right hand side. It is clear then that this is not a quasiconvex utility function, as $0.3$ is larger than $0.05$. We now present the proof for Theorem \ref{th:pure-strat-esr-ser-qconvex}.


\begin{proof}
Given a pure strategy Nash equilibrium under ESR $s^{NE}$ and for each player $i$ a quasiconvex utility function $u_i$:

\begin{align*}
& \mathbb{E}\left[u_i\left(\bm{p}_i\left(s_i^{NE}, s_{-i}^{NE}\right)\right)\right] \geq \mathbb{E}\left[u_i\left(\bm{p}_i\left(s_i, s_{-i}^{NE}\right)\right)\right]\\
\wedge \quad & u_i\left(\bm{p}_i\left(s_i^{NE}, s_{-i}^{NE}\right)\right) = \max_{a \in A_i}u_i\left(\bm{p}_i\left(a, s^{NE}_{-i}\right)\right) \\
\wedge \quad & \forall \bm{\alpha}: \max_{a \in A_i} u_i\left(a, s^{NE}_{-i}\right) \geq u_i\left(\sum_{a \in A_i} \alpha_a \bm{p}_i\left(a, s^{NE}_{-i}\right)\right) \\
& \text{ where }  \forall a \in A_i: \alpha_a \geq 0,\sum_{a \in A_i} \alpha_a=1\\
\implies & \max_{a \in A_i} u_i\left(a, s^{NE}_{-i}\right) \geq u_i\left(\mathbb{E}\left[\bm{p}_i\left(s_i, s_{-i}^{NE}\right)\right]\right)\\
\implies & u_i\left(\bm{p}_i\left(s_i^{NE}, s_{-i}^{NE}\right)\right) = \max_{a \in A_i}u_i\left(\bm{p}_i\left(a, s^{NE}_{-i}\right)\right) \geq u_i\left(\mathbb{E}\left[\bm{p}_i\left(s_i, s_{-i}^{NE}\right)\right]\right) \\
\implies & u_i\left(\bm{p}_i\left(s_i^{NE}, s_{-i}^{NE}\right)\right) \geq u_i\left(\mathbb{E}\left[\bm{p}_i\left(s_i, s_{-i}^{NE}\right)\right]\right) \\
\implies & u_i\left(\mathbb{E}\left[\bm{p}_i\left(s_i^{NE}, s_{-i}^{NE}\right)\right]\right) \geq u_i\left(\mathbb{E}\left[\bm{p}_i\left(s_i, s_{-i}^{NE}\right)\right]\right) \\
\implies &\text{A pure strategy Nash equilibrium under SER}
\end{align*} 
\end{proof}

Here too, we present a textual walk-through of the proof. Initially, we introduce the definition of a Nash equilibrium under ESR. We then observe that because we are dealing with pure strategies, we may eliminate the expectation from the left hand side following Lemma \ref{lemma:utility}. In the right hand side, we state that the utility of this PSNE must equal the utility of the best-response pure strategy. This follows directly from the definition of a (pure strategy) Nash equilibrium. 

We also know that given the extension of Jensen's inequality to quasiconvex functions \cite{dragomir2012jensen}, any convex combination of payoff vectors is bounded by the maximum utility of one such vectors. This implies that the utility of the optimal action is greater than that of any alternative strategy under SER. In the next line, we reintroduce the left hand side from earlier, while keeping the inequality intact. We then remove the maximum, after which we add the expectation into the left hand side. By doing this, we have arrived at the definition of a NE under SER, proving Theorem \ref{th:pure-strat-esr-ser-qconvex}. 

From Theorem \ref{th:pure-strat-ser-esr} and \ref{th:pure-strat-esr-ser-qconvex}, we are able to present our final result of this section, observing that when assuming only quasiconvex utility functions for players in an MONFG, the set of PSNE under SER is equal to the set of PSNE under ESR. We state this formally in Theorem \ref{th:PSNE-esr-ser-qconvex-equal}. 

\begin{theorem}
\label{th:PSNE-esr-ser-qconvex-equal}
Consider a (finite, $n$-player) multi-objective normal-form game where each player has a quasiconvex utility function. The set of pure strategy Nash equilibria under the expected scalarised returns criterion is equal to the set of pure strategy Nash equilibria under the scalarised expected returns criterion.
\end{theorem}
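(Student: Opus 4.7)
The plan is to prove the set equality by establishing both inclusions separately, each of which follows directly from one of the preceding two theorems; in essence, this statement is a corollary of Theorems \ref{th:pure-strat-ser-esr} and \ref{th:pure-strat-esr-ser-qconvex} once we observe that the hypotheses of the latter are available.

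First I would fix an arbitrary finite $n$-player MONFG in which every player $i$ has a quasiconvex utility function $u_i$. Let $\mathrm{PSNE}_{\mathrm{SER}}$ and $\mathrm{PSNE}_{\mathrm{ESR}}$ denote the two sets of pure strategy Nash equilibria. For the inclusion $\mathrm{PSNE}_{\mathrm{SER}} \subseteq \mathrm{PSNE}_{\mathrm{ESR}}$, I would take any $s^{NE} \in \mathrm{PSNE}_{\mathrm{SER}}$ and invoke Theorem \ref{th:pure-strat-ser-esr} directly: that theorem asserts, with no further assumption on the utility functions, that any PSNE under SER is automatically a PSNE under ESR, so $s^{NE} \in \mathrm{PSNE}_{\mathrm{ESR}}$.

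For the reverse inclusion $\mathrm{PSNE}_{\mathrm{ESR}} \subseteq \mathrm{PSNE}_{\mathrm{SER}}$, I would take any $s^{NE} \in \mathrm{PSNE}_{\mathrm{ESR}}$ and invoke Theorem \ref{th:pure-strat-esr-ser-qconvex}, whose hypothesis is precisely the standing assumption that every $u_i$ is quasiconvex. That theorem yields $s^{NE} \in \mathrm{PSNE}_{\mathrm{SER}}$. Combining the two inclusions gives $\mathrm{PSNE}_{\mathrm{ESR}} = \mathrm{PSNE}_{\mathrm{SER}}$.

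There is essentially no obstacle here beyond checking that the hypotheses of the two cited theorems line up with those of the present statement. The only subtlety worth flagging explicitly in the write-up is that the SER-to-ESR direction (Theorem \ref{th:pure-strat-ser-esr}) is unconditional, so the quasiconvexity assumption is only really used in one of the two inclusions; this is worth remarking on because it highlights that the quasiconvexity hypothesis is doing all of the work on the ESR-to-SER side, exactly where Jensen-type bounds for quasiconvex functions (used in the proof of Theorem \ref{th:pure-strat-esr-ser-qconvex}) are required.
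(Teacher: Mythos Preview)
Your proposal is correct and matches the paper's own proof essentially verbatim: both simply observe that Theorem~\ref{th:pure-strat-ser-esr} gives the inclusion $\mathrm{PSNE}_{\mathrm{SER}} \subseteq \mathrm{PSNE}_{\mathrm{ESR}}$ unconditionally, while Theorem~\ref{th:pure-strat-esr-ser-qconvex} gives the reverse inclusion under the quasiconvexity hypothesis. Your added remark that quasiconvexity is only doing work in one direction is a worthwhile clarification but does not depart from the paper's argument.
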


\begin{proof}
This follows directly from Theorems \ref{th:pure-strat-ser-esr} and \ref{th:pure-strat-esr-ser-qconvex}. We know that any PSNE under SER is also a PSNE under ESR. Vice versa, we know that when all players have a quasiconvex utility function, any PSNE under ESR is also a PSNE under SER.  
\end{proof}



This theorem has important implications. Recall that an MONFG under the ESR criterion can always be reduced to an NFG \cite{radulescu2020utility}. We now know that in an MONFG where all players have a quasiconvex utility function, the set of PSNE under both criteria is equal. This indicates that if all utility functions are quasiconvex, we can retrieve pure strategy Nash equilibria for any multi-objective game by looking at the trade-off game. This equivalence lends itself to a novel computational method for retrieving all pure strategy Nash equilibria in multi-objective games. We show this approach in Section \ref{sec:psne-algorithm}. 

In addition, Theorem \ref{th:PSNE-esr-ser-qconvex-equal} allows us to provide an additional existence guarantee for Nash equilibria in MONFGs. Specifically, for any class of NFGs for which the set of PSNE is guaranteed to be non-empty, we can subsequently guarantee a PSNE for any MONFG whose scalarisation with quasiconvex utility functions leads to a game in that original class. Such guarantees for NFGs exist \cite{rosenthal1973class}.

\section{Blended Settings in MONFGs}
\label{sec:blended}
In this section, we briefly move away from comparing the Nash equilibria in MONFGs under both criteria and discuss what happens when allowing for a heterogeneous set of players optimising for different criteria, i.e. some players optimising for the ESR criterion, while others are optimising for SER. We refer to this setting as a \emph{blended setting} and to players in such a setting as a \emph{blended set of players}. Lastly, we refer to the \emph{distribution of players} to denote the proportions of players optimising for either SER or ESR.

Such types of settings are interesting to study, as they are sure to arise in the real world (see Example \ref{exmp:blended}). While this setting is surely worth attention, it is left almost completely unexplored \cite{radulescu2020multiobjective}. We first contribute a novel definition for a Nash equilibrium in a blended setting. Informally, this definition states that no agent can improve on their individual criterion by unilaterally deviating from the joint strategy.

\begin{definition}[Nash equilibrium in a blended setting]
\label{def:MOMA-NE-blended}
A joint strategy $s^{NE}$ leads to a Nash equilibrium in a blended setting if for each agent optimising for the ESR criterion $i \in I \subseteq N$ and each agent optimising for the SER criterion $j \in J \subseteq N$ so that $I \bigcup J = N$ and for all alternative strategies $s_i \in S_i$ and $s_j \in S_j$:
\[
\forall i \in I, \quad \mathbb{E}\left[u_i\left(\mathbf{p}_i\left(s_{i}^{NE}, s_{-i}^{NE}\right)\right] \right) \geq  \mathbb{E}\left[u_i\left(\mathbf{p}_i\left(s_i, s_{-i}^{NE}\right) \right)\right]
\]
and 
\[
\forall j \in J, \quad u_j\left(\mathbb{E}\left[ \bm{p}_j \left(s^{NE}_j, s^{NE}_{-j} \right)\right]\right) \geq u_j\left(\mathbb{E}\left[ \bm{p}_j \left(s_j, s^{NE}_{-j} \right)\right]\right) 
\]
\noindent i.e. $s^{NE}$ is a Nash equilibrium in a blended setting if no agent optimising for the ESR criterion can increase the \emph{expected utility of its payoffs} and no agent optimising for the SER criterion can increase the \emph{utility of its expected payoffs} by deviating unilaterally from $s^{NE}$.
\end{definition}

In a best case scenario, it would be possible to calculate or learn Nash equilibria for any given MONFG with a blended set of players \emph{without knowing the distribution of these players}. Specifically, this would mean that the game is robust to any changes in the player distribution and shows some static characteristics. Luckily, we are able to use the theorems provided in Section \ref{sec:psne} to derive several such properties. We first observe that if a PSNE exists in an MONFG where every player is optimising for the SER criterion, it is also guaranteed to be a PSNE in any blended setting. Formally: 

\begin{corollary}
\label{co:ser-blended-psne}
Consider a (finite, $n$-player) multi-objective normal-form game where players are optimising for the scalarised expected returns criterion. A pure strategy Nash equilibrium in this setting is also a Nash equilibrium in any blended setting.
\end{corollary}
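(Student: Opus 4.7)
The plan is to derive this corollary directly from Theorem \ref{th:pure-strat-ser-esr}, which already does the heavy lifting: it states that any pure strategy Nash equilibrium under SER is automatically a pure strategy Nash equilibrium under ESR, with no extra assumptions on the utility functions. So I would begin by letting $s^{NE}$ be a PSNE in the all-SER setting, and immediately invoke Theorem \ref{th:pure-strat-ser-esr} to conclude that the ESR no-deviation inequality also holds at $s^{NE}$ for every player $i$ and every alternative strategy $s_i \in S_i$ (including mixed deviations, since best responses are quantified over all of $S_i$).

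Next, I would fix an arbitrary blended partition $N = I \cup J$, with $I$ the ESR-optimising players and $J$ the SER-optimising players, and check the two families of inequalities in Definition \ref{def:MOMA-NE-blended} one at a time. For each $j \in J$ the SER inequality is exactly what it means for $s^{NE}$ to be a PSNE under SER in the original (all-SER) game, so it holds by assumption. For each $i \in I$ the required ESR inequality is precisely the one supplied by Theorem \ref{th:pure-strat-ser-esr}. Both conditions of Definition \ref{def:MOMA-NE-blended} are therefore met, so $s^{NE}$ is a Nash equilibrium in this blended setting; since the partition was arbitrary, the conclusion holds for every blended setting.

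There is essentially no technical obstacle here: the corollary is a bookkeeping consequence of Theorem \ref{th:pure-strat-ser-esr} plus the definition of a blended equilibrium. The only subtlety worth emphasising is that Theorem \ref{th:pure-strat-ser-esr} is needed in its full strength, i.e. the implication SER-PSNE $\Rightarrow$ ESR-PSNE against \emph{all} alternative strategies $s_i$, not merely against pure ones; this is what allows the ESR criterion to be enforced on the subset $I$ even when $I$ is arbitrary. No quasiconvexity or continuity assumption on utilities enters, which mirrors the fact that Theorem \ref{th:pure-strat-ser-esr} itself is unconditional.
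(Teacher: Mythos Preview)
Your proposal is correct and follows essentially the same approach as the paper: invoke Theorem~\ref{th:pure-strat-ser-esr} to get the ESR no-deviation condition for the players in $I$, and use the standing SER-PSNE assumption for the players in $J$, so that both clauses of Definition~\ref{def:MOMA-NE-blended} are satisfied for an arbitrary partition. Your write-up is slightly more explicit about quantifying over all mixed deviations and over all partitions, but the argument is the same.
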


\begin{proof}
The proof follows directly from Theorem \ref{th:pure-strat-ser-esr}. Starting with this theorem, we know that any pure strategy that is a Nash equilibrium under SER is also a NE under ESR. Given an MONFG with a PSNE under SER, players optimising for this criterion in a blended setting have no incentive to deviate. As it is also a PSNE under ESR, the players optimising for this criterion do not have an incentive to deviate either. As such, this strategy presents a PSNE in any blended setting.
\end{proof}

Note that following the same logic, we also guarantee that any PSNE in a blended setting is a PSNE in the trade-off game. Observe that PSNE are not altered by players shifting from the SER criterion to ESR, as it already was a PSNE under ESR. In addition to this we can use the theorems from Section \ref{sec:psne} to show that if a PSNE exists in a game scalarised using quasiconvex utility functions, it must necessarily also be a PSNE in any blended setting. We formalise this in Corollary \ref{co:blended-psne}.

\begin{corollary}
\label{co:blended-psne}
Consider a (finite, $n$-player) multi-objective normal-form game where each player has a quasiconvex utility function. A pure strategy Nash equilibrium under the expected scalarised returns criterion is also a Nash equilibrium in any blended setting.
\end{corollary}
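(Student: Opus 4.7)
The plan is to derive this corollary as a direct consequence of Theorem \ref{th:PSNE-esr-ser-qconvex-equal}, which already establishes that under the quasiconvexity assumption the sets of PSNE under SER and ESR coincide. So the main work has effectively been done: I just need to unpack what a PSNE under both criteria simultaneously means for the blended-setting definition.

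Concretely, I would start by letting $s^{NE}$ be a PSNE under ESR in an MONFG where every player has a quasiconvex utility function, and applying Theorem \ref{th:PSNE-esr-ser-qconvex-equal} to conclude that $s^{NE}$ is also a PSNE under SER. The key point to emphasise is that the Nash equilibrium definitions in Definitions \ref{def:MOMA-NE-ESR} and \ref{def:MOMA-NE-SER} quantify over \emph{all} alternative strategies $s_i \in S_i$, not just pure ones, so a PSNE under either criterion already excludes profitable mixed deviations.

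Next, given any partition $N = I \cup J$ corresponding to a choice of blended distribution (with $I$ the ESR-optimisers and $J$ the SER-optimisers), I would verify the two conjuncts of Definition \ref{def:MOMA-NE-blended} separately. For each $i \in I$, the inequality $\mathbb{E}[u_i(\bm{p}_i(s_i^{NE}, s_{-i}^{NE}))] \geq \mathbb{E}[u_i(\bm{p}_i(s_i, s_{-i}^{NE}))]$ for all $s_i \in S_i$ follows from $s^{NE}$ being a PSNE under ESR. For each $j \in J$, the inequality $u_j(\mathbb{E}[\bm{p}_j(s_j^{NE}, s_{-j}^{NE})]) \geq u_j(\mathbb{E}[\bm{p}_j(s_j, s_{-j}^{NE})])$ for all $s_j \in S_j$ follows from $s^{NE}$ being a PSNE under SER. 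Combining the two yields exactly Definition \ref{def:MOMA-NE-blended}, and since the partition was arbitrary, the conclusion holds for \emph{any} blended setting.

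There isn't really a hard step here — the proof is essentially a bookkeeping argument once Theorem \ref{th:PSNE-esr-ser-qconvex-equal} is in hand. If I had to flag a subtle point, it is the earlier observation that the deviation strategies $s_i$ in the two PSNE definitions range over mixed strategies, so that ruling out deviation under one criterion for one subset of players and under the other criterion for the complementary subset is genuinely enough to satisfy the blended equilibrium condition; no extra consistency condition between the two subsets is introduced by the blending itself.
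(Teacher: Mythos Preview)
Your proposal is correct and follows essentially the same approach as the paper: invoke Theorem~\ref{th:PSNE-esr-ser-qconvex-equal} to conclude that the PSNE under ESR is simultaneously a PSNE under SER, and then observe that this immediately satisfies both conjuncts of Definition~\ref{def:MOMA-NE-blended} for any partition of players. Your write-up is a bit more explicit in unpacking the two conjuncts and in noting that the deviation inequalities already range over mixed strategies, but the underlying argument is the same.
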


\begin{proof}
From Theorem \ref{th:PSNE-esr-ser-qconvex-equal}, we know that when assuming only quasiconvex utility functions, a PSNE under ESR is necessarily also a PSNE under SER. Thus, by the same logic used in the previous proof, no player can improve on their particular criterion by deviating. As such, we have a pure strategy Nash equilibrium in any blended setting.
\end{proof}

We can use this corollary in a final theorem. Intuitively, we observe that the set of PSNE in any blended setting with players employing quasiconvex utility functions is equal to the set of PSNE in the trade-off game. Formally:

\begin{theorem}
\label{th:blended-psne-equal}
Consider a (finite, $n$-player) multi-objective normal-form game where each player has a quasiconvex utility function. The set of pure strategy Nash equilibria in the trade-off game is equal to the set of pure strategy Nash equilibria in any blended setting.
\end{theorem}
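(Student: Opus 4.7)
The plan is to prove the set equality by establishing both inclusions separately, leaning heavily on the equivalences from Section \ref{sec:psne} applied player-by-player. Concretely, I would show that (i) every PSNE of the trade-off game is a PSNE in any blended setting, and (ii) every PSNE in any blended setting is a PSNE in the trade-off game. Throughout I will exploit the fact that both the blended-setting condition (Definition \ref{def:MOMA-NE-blended}) and the ESR/SER conditions are separable over players: the inequality for player $i$ only depends on fixing $s^{NE}_{-i}$ and varying $s_i$, so the criterion each \emph{other} player is optimising for is irrelevant when checking player $i$'s deviation condition.

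For the first inclusion, I would simply invoke Corollary \ref{co:blended-psne}. A PSNE in the trade-off game is by definition a PSNE under ESR, and the corollary then directly asserts that it is a PSNE in any blended setting. No additional work is required here beyond unpacking the definitions.

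For the second inclusion, let $s^{NE}$ be a PSNE in a blended setting with ESR-players $I$ and SER-players $J$ partitioning $N$. I need to verify the ESR-deviation inequality for \emph{every} player. For each $i \in I$ the ESR inequality already holds by Definition \ref{def:MOMA-NE-blended}. For each $j \in J$, I only have the SER inequality $u_j(\mathbb{E}[\bm{p}_j(s^{NE}_j, s^{NE}_{-j})]) \geq u_j(\mathbb{E}[\bm{p}_j(s_j, s^{NE}_{-j})])$ for every $s_j \in S_j$. Because $s^{NE}_j$ is a pure strategy, I can apply the single-player argument already carried out in the proof of Theorem \ref{th:pure-strat-ser-esr}: remove the expectation on the left via Lemma \ref{lemma:utility}, specialise the right-hand side to pure deviations $a \in A_j$ to get $u_j(\bm{p}_j(s^{NE}_j, s^{NE}_{-j})) \geq u_j(\bm{p}_j(a, s^{NE}_{-j}))$ for all $a$, observe that any convex combination of these scalar utilities is bounded by the maximum, and conclude $\mathbb{E}[u_j(\bm{p}_j(s^{NE}_j, s^{NE}_{-j}))] \geq \mathbb{E}[u_j(\bm{p}_j(s_j, s^{NE}_{-j}))]$ for every $s_j$. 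Thus $s^{NE}$ satisfies the ESR condition for every player and is a PSNE of the trade-off game.

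The argument is essentially a bookkeeping exercise rather than a new technical result, and the main obstacle to watch for is precisely this separability point: one must be careful that Theorem \ref{th:pure-strat-ser-esr} is a single-player, single-deviation argument, so it continues to apply verbatim for each $j \in J$ even though the other players in the blended setting may be optimising for different criteria. A useful sanity check worth mentioning is that only the reverse inclusion needs the quasiconvexity hypothesis (via Corollary \ref{co:blended-psne}, which in turn rests on Theorem \ref{th:pure-strat-esr-ser-qconvex}); the forward inclusion from blended back to the trade-off game, as shown above, does not require quasiconvex utilities at all.
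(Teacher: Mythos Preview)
Your proposal is correct and follows essentially the same approach as the paper: one inclusion via Corollary~\ref{co:blended-psne}, the other via the player-by-player ``SER~$\Rightarrow$~ESR'' argument from Theorem~\ref{th:pure-strat-ser-esr} (which the paper only gestures at with its ``by extension'' remark following Corollary~\ref{co:ser-blended-psne}). Your write-up is in fact more explicit than the paper's about the second inclusion and about the separability point, and your observation that only the trade-off~$\subseteq$~blended direction actually requires quasiconvexity is accurate.
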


\begin{proof}
This follows directly from Theorem \ref{th:PSNE-esr-ser-qconvex-equal} and Corollary \ref{co:blended-psne}. First, Corollary \ref{co:blended-psne} guarantees that a PSNE under SER is also a PSNE in any blended setting. By extension, this shows that a PSNE in a blended setting is also a PSNE in the trade-off game. Theorem \ref{th:PSNE-esr-ser-qconvex-equal} subsequently guarantees that these PSNE are also PSNE under SER.
\end{proof}


%

The results presented in this section have several important implications. Specifically, we now guarantee that in certain situations we can determine pure strategy Nash equilibria for blended settings \emph{a priori}, without even knowing the distribution of players optimising for either criterion. In particular, Theorem \ref{th:blended-psne-equal} shows that when players are employing quasiconvex utility functions, the set of PSNE in a blended setting corresponds to that of the trade-off game. In the following section, we use this property to derive a general algorithm for calculating PSNE in MONFGs under ESR, SER or indeed any blended setting given quasiconvex utility functions.

\section{Algorithmic Implications}
\label{sec:psne-algorithm}
As a final contribution, we propose a novel algorithm for computing all pure strategy Nash equilibria in finite $n$-player MONFGs, given quasiconvex utility functions. From this point onward, we will only consider such utility functions when discussing the algorithm. We highlight that the approach laid out in this section will be shown to operate correctly for any optimisation criterion or blended setting. We provide a pseudocode implementation in Algorithm \ref{alg:PSNE}. A functioning implementation can be found at \url{https://github.com/wilrop/moqups}.

The algorithm is designed in two sequential components. First, we use the fact that when optimising for the ESR criterion in an MONFG, the game can be trivially reduced to a single-objective NFG \cite{radulescu2020utility}. Specifically, given an MONFG $G=(N, \mathcal{A}, \bm{p})$, we can construct a single-objective NFG $G'=(N, \mathcal{A}, f)$ where $N$ and $\mathcal{A}$ are the same. The payoff functions for the NFG are defined as the composition between player $i$'s utility function $u_i$ and their vectorial payoff function $\bm{p}_i$. This construction is defined in the function \texttt{reduce\_monfg} from line 1 to 6.

Second, from this reduction it follows that we can use algorithms from single-objective NFGs to retrieve all PSNE in the original MONFG. If one or more of such a PSNE exists, this would further imply that the same strategy is a PSNE in the MONFG under ESR. By Theorem \ref{th:pure-strat-esr-ser-qconvex} we then also have a PSNE under SER and by Corollary \ref{co:blended-psne} we also have a PSNE in any blended setting. Lastly, we know by Theorems \ref{th:PSNE-esr-ser-qconvex-equal} and \ref{th:blended-psne-equal} that the set of PSNE retrieved by this method necessarily contains all PSNE in the MONFG, irrespective of the player distribution. We implement a straightforward method for computing all PSNE in an NFG in the function \texttt{compute\_all\_PSNE} from line 8 to 16. 

\begin{algorithm}[h]
    \caption{Computing all PSNE in an MONFG}\label{alg:PSNE}
    \textbf{Input:} an MONFG $G=(N, \mathcal{A}, \bm{p})$ and quasiconvex utility functions $u=(u_1, \cdots, u_n)$
    \begin{algorithmic}[1]
        \Function{reduce\_monfg}{monfg, u}
            \State $N, \mathcal{A}, \bm{p} \gets$ monfg
            \State $u_1, \cdots, u_n \gets$ u
            \State $f \gets (u_1 \circ \bm{p}_1, \cdots, u_n \circ \bm{p}_n)$
            \State $G'\gets (N, \mathcal{A}, f)$  \Comment{An induced normal-form game}
            \State \Return $G'$
        \EndFunction
        \Function{compute\_all\_PSNE}{nfg}
            \State $S = \emptyset$
            \For{PS \textbf{in} nfg}  \Comment{Loop over all pure strategies}
                \If{PS is a PSNE}  \Comment{If it is a PSNE add it to the set}
                    \State {$S \gets S \cup \{\text{PS}\}$}
                \EndIf
            \EndFor
            \State \Return $S$
        \EndFunction
    \State nfg $\gets$ \textproc{reduce\_monfg}($G$, $u$)
    \State PSNE $\gets$ \textproc{compute\_all\_PSNE}(nfg)
    \end{algorithmic}
\end{algorithm}

The algorithm sequentially goes through these two parts. We first call the reduction in line 17, after which we retrieve all PSNE in line 18. One important benefit of this approach is that the complexity does not increase when including more objectives. Instead, only the runtime is increased when performing the initial scalarisation for more objectives. This makes a game with $d$ objectives effectively as hard to solve as a game with 2 objectives.

For simplicity, we employ a naive support enumeration approach to compute all pure strategy Nash equilibria in the single-objective NFG. There also exist faster algorithms with improved computational complexity \cite{corley2020regretbased}. Furthermore, when the induced single-objective game has a specific form, one could use a tailored algorithm for computing all pure strategy Nash equilibria. Lastly, parallelism could be employed to further speed up the required computations.

We note the fact that Algorithm \ref{alg:PSNE} can be trivially modified to compute a single sample pure strategy Nash equilibrium, rather than all PSNE. This can be achieved by first performing the initial reduction to a single-objective NFG and subsequently applying a method for computing a sample PSNE rather than all PSNE. 

\section{Related Work}
\label{sec:related-work}
Multi-objective normal-form games were introduced by Blackwell \cite{blackwell1954analog} and have since been considered with many different approaches. We highlight below several important works and frame their contributions in terms of similarities and differences to the utility-based approach we employ in this article.

Early studies on multi-objective games, often referred to as multi-criteria games, largely focused on arguing its importance and extending relevant solution concepts from single-objective game theory to this setting \cite{blackwell1954analog,shapley1959equilibrium,zeleny1975games}. Most works considered the case in which agents do not know their utility function, and thus define utility function agnostic equilibria. Shapley and Rigby \cite{shapley1959equilibrium} introduce the now widely used concept of Pareto Nash equilibria. They extend and characterise the set of mixed-strategy agnostic Pareto-Nash equilibria for multi-objective two-person zero-sum games for linear utility functions. We stress here the apparent disconnect between the proposed solution concept and presented characterisation. Specifically, mixed strategy Pareto Nash equilibria are defined as strategies that result in non-dominated vectors. However, by their definition this implies that we are considering \emph{expected returns} thus leading us to the SER criterion in the utility-based approach. When subsequently scalarising the game, utility functions are conveniently restricted to be linear only, as this allows for discussion of these points in the equivalent single-objective game. As such, they implicitly shift to the \emph{expected utility} or ESR in the utility-based framework. Note that this final scalarisation is only mathematically sound when restricting to linear utility functions \cite{radulescu2020utility}.

We highlight that such conflicting approaches are regularly seen in multi-objective game theory. On the one hand, often a utility-agnostic approach is assumed and expected vectors (i.e. SER) are considered \cite{shapley1959equilibrium,voorneveld2000ideal,ismaili2018existence}. On the other hand, linear utility functions are assumed in order to facilitate a reduction to an equivalent single-objective game (i.e. ESR) and allow claims to be made about the resulting equilibria \cite{corley1985games,lozovanu2005multiobjective,zeleny1975games}. Note that under such linear utility functions, ESR and SER are in fact equivalent \cite{radulescu2020utility}.

There has also been work on formulating algorithms for finding Pareto-Nash equilibria in multi-objective non-cooperative games. Lozovanu et al. \cite{lozovanu2005multiobjective} propose a method that computes the trade-off game (i.e., implicitly assume the ESR criterion) for every linear utility function for which the weights sum to one and subsequently find its NE. In addition, recent work by Ismaili \cite{ismaili2018existence} highlights the specific properties of pure strategy Pareto-Nash equilibria when considering expected payoff vectors and provides an algorithm for computing these equilibria.

In this work we assume a utility-based perspective. This approach is introduced by Roijers et al. \cite{roijers2013survey} who subsequently make the distinction between the ESR and SER criterion explicit and highlight their differences. While their work focused mostly on the single-agent case, a recent survey assumes this approach in multi-objective multi-agent settings \cite{radulescu2020multiobjective}. They further offer a taxonomy of such decision making problems on the basis of payoffs, utility and the type of desired outcomes. 

Conforming to this taxonomy, in this paper we focus on individual utility, i.e., even if agents receive the same payoff vector they may value this payoff vector differently. Furthermore, we assume that no social welfare mechanism is employed, and that we are looking for stable outcomes in settings with self-interested agents. This approach is also assumed by R\u{a}dulescu et al. \cite{radulescu2020utility}, who provide several foundational results that we build upon in this article. First, they show that an MONFG under the ESR criterion can always be reduced to an NFG and that when using linear utility functions, an MONFG can always be scalarised a priori. Furthermore, they formally show that under SER no NE need necessarily exist. 

As previously mentioned, throughout most earlier work the (implicit) assumption is made that utility functions are linear. Nonetheless, it is a well-known fact that utility functions can be highly non-linear. This has been noted in a variety of scenarios ranging from queueing games \cite{breinbjerg2017equilibrium} to travel choice behaviour \cite{koppelman1981nonlinear}. Bergstresser and Yu \cite{bergstresser1977domination} bring up the idea that utility functions could also be non-linear in multi-objective normal-form games. However, in their practical analysis, they only consider linear utility functions and apply the ESR criterion to obtain the resulting trade-off game and corresponding solution points. Recent work in MONFGs often explores reinforcement learning techniques to find optimal solutions for agents in such settings when operating under non-linear utility functions \cite{radulescu2022opponent,ropke2021communication}. Lastly, in settings where the utility functions are not easily expressed, it can prove interesting to explore different elicitation strategies. One approach that has proven successful examined the use of Gaussian processes to elicit user preferences by asking the user to rank items or perform simple pairwise comparisons \cite{roijers2021interactive,zintgraf2018ordered}.

\section{Conclusion and Future Work}
\label{sec:conclusion}
We explored the theoretical foundations of normal-form games with vectorial payoffs, also referred to as multi-objective normal-form games. To this end, we assumed a utility-based approach that guarantees the existence of a utility function that can be used to scalarise a payoff vector \cite{hayes2022practical,roijers2017multi}. A known complicating factor is that such a scalarisation can occur at two stages, specifically when considering the utility of mixed strategies. We refer to these approaches as optimisation criteria, as agents are rational actors that wish to optimise for their particular criterion. The first option is to scalarise the payoff vectors for each joint action in the payoff matrix directly and calculate the expected utility of mixed strategies, also referred to as the expected scalarised returns criterion (ESR). On the other hand, we can also calculate the expected payoff vector with regards to the mixed strategy and calculate the utility of this expected vector. This criterion is also known as the scalarised expected returns criterion (SER). Earlier work showed that in this latter approach, no Nash equilibrium needs to exist \cite{radulescu2020utility}. 

We first prove that the existence of a mixed strategy Nash equilibrium is guaranteed in the setting of MONFGs under SER when we restrict agents to employ only continuous quasiconcave utility functions. Second, we prove that under strictly convex utility functions, and therefore quasiconvex utility functions as well, Nash equilibria are not guaranteed to exist. Given that agents in multi-agent systems are often assumed or incentivised to play according to Nash equilibria, such guarantees will prove important in the development of future algorithms.

Next, we explored whether there exists a relationship in either the number of Nash equilibria or the equilibria themselves in MONFGs under both optimisation criteria. We found that even when an equilibrium exists under both criteria, the number of Nash equilibria can differ. In addition, no Nash equilibria need necessarily be shared.

These negative results led us to restrict our focus to only pure strategy Nash equilibria (PSNE). We show that we can guarantee that the set of pure strategy Nash equilibria under SER equals that under ESR when assuming quasiconvex utility functions. When disregarding this assumption, we are only able to guarantee that a PSNE under SER is also a PSNE under ESR. Due to the generality of this result, we further investigated blended settings in which agents are allowed to optimise for different criteria. We contributed a novel definition of a Nash equilibrium for this setting and subsequently showed that when assuming only quasiconvex utility functions, the set of PSNE in any blended setting is equivalent to that of the trade-off game. 

Our last contribution added a novel algorithm for computing all PSNE in any game with vectorial payoffs and quasiconvex utility functions, irrespective of whether agents are optimising for ESR, SER or any blended settings. This algorithm first scalarises the input MONFG and subsequently calculates the PSNE in this game. Due to the previous theorems, we can guarantee the correctness of this approach.

For future work, we aim to focus on two possible directions. First, we propose to look further into the theoretical foundations of MONFGs. Given that quasiconcave utility functions guarantee Nash equilibria, it is interesting to explore what other assumptions on utility functions give rise to NE. In addition, it can prove interesting to study specific restrictions on payoff structures such as zero-sum games to find whether the assumptions on utility functions can be relaxed. Furthermore, we already addressed that blended settings are left almost completely unexplored. Due to the practical relevance of such settings, we aim to provide stronger theoretical guarantees in these scenarios as well. 

The second direction for future work is the algorithmic aspect. We already contribute an initial algorithm for computing all PSNE in an MONFG when assuming quasiconvex utility functions. We believe that designing an algorithm that is able to do this for any type of utility function or even for mixed-strategy NE in general will prove a worthwhile pursuit. In single-objective NFGs, such approaches already exist. Furthermore, more efficient methods can be applied on subclasses of NFGs that contain specific properties such as a zero-sum payoff structure. Studying whether such methods apply in the setting of MONFGs can also prove important.

\section*{Acknowledgments}
The first author is supported by the Research Foundation – Flanders (FWO), grant number 1197622N. This research was supported by funding from the Flemish Government under the ``Onderzoeksprogramma Artifici\"{e}le Intelligentie (AI) Vlaanderen'' program.

\bibliographystyle{unsrtnat}
\bibliography{bibliography}  






\end{document}